\crefname{figure}{Figure}{Figures}
\newtheorem{theorem}{Theorem}
\newtheorem{lemma}{Lemma}
\newtheorem{observation}{Observation}
\newtheorem{corollary}{Corollary}
\newtheorem{definition}{Definition}
\newenvironment{proof}[1]{\noindent{\bf Proof }#1}%
        {\hspace*{\fill}$\Box$\par\vspace{4mm}}
\newcommand{\RR}{{\mathbb{R}}}
\newcommand{\cset}{{\mathcal C}}
\newcommand{\dset}{{\mathcal D^2}}
\title{A Simple PTAS for Weighted $k$-means and Sensor Coverage}
\author{
    Akash Pareek\\
    Department of Computer Science and Automation\\
    IISc Bangalore\\
    \texttt{akashpareek@iisc.ac.in}

   \and

    Supratim Shit\\
    IIIT Delhi\\
    \texttt{supratim@iiitd.ac.in}
}
\date{}
\begin{document}

\maketitle







\begin{abstract}
Clustering is a fundamental technique in data analysis, with the $k$-means problem being one of the most widely studied objectives due to its simplicity and broad applicability. In many practical scenarios, data points come with associated weights that reflect their importance, frequency, or confidence. Given a weighted point set $P \subset \mathbb{R}^d$, where each point $p \in P$ has a positive weight $w_p$, the goal is to compute a set of $k$ centers $\mathcal{C} = \{c_1, c_2, \dots, c_k\} \subset \mathbb{R}^d$ that minimizes the weighted clustering cost: $\Delta^w(P, \mathcal{C}) = \sum_{p \in P} w_p \cdot d(p, \mathcal{C})^2,$ where $d(p, \mathcal{C})$ denotes the Euclidean distance from $p$ to its nearest center in $\mathcal{C}$. Although most existing \emph{coreset}-based algorithms for $k$-means extend naturally to the weighted setting and provide a PTAS, no prior work has offered a simple, coreset-free PTAS  designed specifically for the weighted $k$-means problem.

In this paper, we present a simple PTAS for weighted 
$k$-means that do not rely on coresets. Building upon the framework of Jaiswal, Kumar, and Sen \cite{jaiswal2012simple} for the unweighted case, we extend the result to the weighted setting by using the weighted $\dset$-sampling technique. Our algorithm runs in time $nd \cdot 2^{\tilde{O}(k^2/\epsilon)}$ and outputs a set of $k$ centers whose total clustering cost is within a $(1+\epsilon)$-factor of the optimal cost.

As a key application of the weighted $k$-means, we obtain a PTAS for the \emph{sensor coverage problem}, which can also be viewed as a continuous locational optimization problem. For this problem, the best-known result prior to our work was an $O(\log k)$-approximation due to Deshpande \cite{deshpande2014guaranteed}. In contrast, our algorithm guarantees a $(1 + \epsilon)$-approximation to the optimal coverage cost even before applying refinement steps like Lloyd’s descent.

\newpage

\end{abstract}
\section{Introduction}
Clustering is one of the fundamental steps in the analysis of large datasets. A classical and widely used approach to cluster data is the \emph{$k$-means clustering}, where the input is a point set \( P \subset \mathbb{R}^d \) of size $n$, and the objective is to compute a set of \( k \) centers \( \mathcal{C} = \{c_1, c_2, \dots, c_k\} \subset \mathbb{R}^d \) that minimizes the total squared distance of each point to its nearest center, formally given by:

$$\Delta(P, \mathcal{C}) = \sum_{p \in P} d(p, \mathcal{C})^2,$$

where \( d(p, \mathcal{C})^2= \min_{c \in \mathcal{C}} \|p - c\|^2 \) denotes the squared Euclidean distance from point \( p \) to the closest center in \( \mathcal{C} \).

As the objective function of \(k\)-means clustering is both intuitive and computationally well-structured, it has been widely applied across various domains, including machine learning, image processing, dimensionality reduction, and community detection in large-scale graphs.

The $k$-means objective implicitly assumes that all the points in the dataset are equally important. However, in many practical applications, data points may have different importance or influence. This asymmetry can be naturally modeled by associating each point \( p \in P \) with a non-negative real-value weight \( w_p\), representing its significance in the dataset. This leads to the \emph{weighted $k$-means} problem, which generalizes the $k$-means by minimizing the weighted clustering cost:

$$ \Delta^w(P, \mathcal{C}) = \sum_{p \in P} w_p \cdot d(p, \mathcal{C})^2. $$

Here, the goal is to compute \( k \) centers \( \mathcal{C} \subset \mathbb{R}^d \) that minimize \( \Delta^w(P, \mathcal{C}) \). This formulation captures scenarios in which each data point contributes to the clustering objective proportionally to its associated weight. Weighted \(k\)-means arises naturally in a variety of real-world applications, including those involving aggregated observations, non-uniform sampling frequencies, varying confidence levels, geospatial analysis, document and text clustering, imbalanced datasets, and so on.

Despite extensive research on the $k$-means problem, it remains NP-hard to compute an optimal set of centers \( \mathcal{C} \) even for the case \( k = 2 \) \cite{dasgupta2008hardness}. A widely used heuristic for solving $k$-means in practice is \emph{Lloyd’s algorithm} \cite{lloyd1982least}. While Lloyd’s algorithm often performs well empirically, it is known to converge to local minima and may exhibit poor worst-case behavior \cite{arthur2006slow}. To address the sensitivity of Lloyd’s Algorithm, Arthur and Vassilvitskii proposed the \emph{$k$-means++} algorithm \cite{arthur2006k},  that instead of choosing the initial $k$-centers at random (like Lloyd's), picks the initial $k$ centers using $\dset$-sampling. The $\dset$-sampling can be informally defined as follows: Let $\cset'$ be the set of centers chosen by the $\dset$-sampling. Initially, $\cset'$ is empty. The first center is chosen at random and inserted into the set $\cset'$. Any other point $p \in P$ is selected as the next center with probability $d(p,\cset')$. This process is continued till $\cset'$ has exactly $k$ centers. Now, the set $\cset'$ is used as initial centers for Lloyd's algorithm. Using this $\dset$-sampling technique, Arthur and Vassilvitskii showed that the $k$-means++ algorithm exhibits $O(\log k)$-approximation to the optimal clustering.

The $k$-means problem being NP-hard, researchers always try to find good approximate solutions for the problem, i.e., to return a set of centers such that the $k$-means cost is close to the optimal $k$-means cost. Matousek \cite{matouvsek2000approximate} gave a $(1+\epsilon)$-approximation which runs in time $O(n \log k \hspace{.1cm} n \epsilon^{-2k^2d})$. Kanungo et al. \cite{kanungo2002local} gave a $(9+ \epsilon)$-approximation algorithm which runs in time $O(n\log n + n\epsilon^{-d}\log (1/\epsilon)+ n^2k^3\log n)$. Kumar et al. \cite{kumar2004simple} also gave an $(1+\epsilon)$-approximation which runs in time $nd\cdot2^{(k/\epsilon)^{O(1)}}$. Jaiswal et al. \cite{jaiswal2012simple} gave a simple $\dset$-sampling based $(1+\epsilon)$-approximation which runs in time $nd\cdot 2^k\cdot 2^{\tilde{O}(k/\epsilon)}$. There are many other works related to $k$-means, we mention a few of them in \cref{sec:relatedwork}.

In addition to the above results for $k$-means, a lot of work for $k$-means has focused on designing \emph{coreset-based algorithms}. A coreset is a small, weighted subset of the input data that approximates the original data. For the $k$-means objective function, the clustering cost on the coreset closely matches the cost on the full dataset. Coreset-based approaches typically proceed in two steps. First, a coreset is constructed, which is the most technically involved part of the process. Unlike simple sampling techniques, coreset construction often requires careful sensitivity analysis, importance sampling, and geometric partitioning, making it algorithmically and analytically more complicated. The second step applies a possibly exhaustive search or approximation algorithm on the small coreset instead of the full dataset. Since the size of the coreset is usually independent of the size of the original input (and often depends only on parameters like $k$, $d$, and $\epsilon$), this enables efficient PTAS for $k$-means.

All coreset-based algorithms for unweighted $k$-means naturally extend to the weighted setting, thereby yielding PTAS for weighted $k$-means as well. One of the earliest such constructions is due to Har-Peled and Mazumdar~\cite{har2004coresets}, who developed a coreset of size $O(k \cdot \epsilon^{-d} \log n)$ for low-dimensional Euclidean spaces. This construction leads to a PTAS with a running time $O\left(n + k^{k+2} \cdot \epsilon^{-(2d+1)k} \cdot \log^{k+1} n \cdot \log^{k}(1/\epsilon)\right)$. More recently, ~\cite{cohen2022improved} presented improved coreset constructions for $k$-means clustering, achieving a size of $\tilde{O}\left( \min(k^{4/3} \cdot \epsilon^{-2},\ k \cdot \epsilon^{-4}) \right)$. Several other coreset-based techniques have been proposed over the years, each offering different trade-offs between coreset size and construction time, we discuss a few of them in \cref{sec:relatedwork}. To the best of our knowledge, the coreset-based approach is the only method that obtains a PTAS for the weighted $k$-means.

As mentioned earlier, coreset-based methods can be quite complex to analyze and often difficult to implement due to the technical nature of coreset construction. Therefore, in this paper, we present a \emph{simple} $(1 + \epsilon)$-approximation algorithm for the weighted $k$-means problem using weighted $D^2$-sampling. Our method avoids coresets entirely and runs in time $nd \cdot 2^{\tilde{O}(k^2 / \epsilon)}$. Our result extends the algorithm of Jaiswal et al.~\cite{jaiswal2012simple} to the weighted setting. Compared to coreset-based approaches, our algorithm is much simpler to analyze and is easy to implement.

As an application of our result, we next provide a $(1+\epsilon)$-approximation for the \emph{sensor covering problem}. Sensor coverage is a central problem in large-scale systems. At its core, it deals with the fundamental question of \emph{where} to place sensors to effectively monitor a given environment. Over the past two decades, this problem has attracted considerable attention, leading to a variety of different formulations. Despite the differences in these models, the goal remains the same: to formalize a meaningful metric for the \emph{quality of sensing} and to find sensor placements that either exactly or approximately optimize this metric.

In this paper, we focus on a version of the sensor coverage problem that is framed as a \emph{locational optimization problem}, similar to the formulations studied in \cite{cortes2004coverage}. Such problems have been explored extensively in the context of spatial resource allocation. A widely used approach for solving the sensor covering problem is the \emph{Lloyd’s descent algorithm}, which iteratively adjusts sensor positions to locally minimize the coverage cost. While Lloyd’s method is effective in many practical settings, its performance is highly sensitive to the \emph{initial configuration} of the sensors. In most existing works, this initial configuration is chosen uniformly at random, which offers no guarantee on the quality of the solution obtained after convergence. Deshpande \cite{deshpande2014guaranteed} for the first time showed that the sensor covering problem can be posed as a weighted $k$-means problem and obtained an approximation ratio of $O(\log k)$ by an analysis similar to Arthur and Vassilvitskii \cite{arthur2006k} using weighted $\dset$-sampling for initial configuration. His result guarantees an $O(\log k)$-approximation to the coverage costs even before Lloyd's descent is applied. 

For some $\epsilon \in (0,1)$. using our result of $(1+\epsilon)$ for weighted $k$-means and the relation between sensor covering and weighted $k$-means \cite{deshpande2014guaranteed}, we obtain $(1+ \epsilon)$-approximation for sensor covering even before Lloyd's descent is applied. Similar to weighted $k$-means, the sensor covering algorithm runs in time $n\cdot 2^{\tilde{O}(k^2/\epsilon)}$. We discuss more about the sensor covering problem and the result related to it in \cref{sec:sensor}.

\subsection{Our Results}
In this work, we present a simple PTAS for the weighted \(k\)-means problem. Our algorithm is simple and elegant, and is inspired by the work of  Jaiswal et al. \cite{jaiswal2012simple}, who developed a $k$-means PTAS for the unweighted setting. Our primary result is the following theorem:

\begin{theorem}\label{thm:ptaskmeans}
    Let $P \subset \RR^d$ be a point set such that each point $p \in P$ has positive finite weight $w_p\ge 1$. Then there is a set $\cset=(c_1',c_2',\dots,c_k')$ of $k$ centers computed in polynomial time such that

    $$\Delta^w(P,\cset)\le (1+\epsilon)\Delta_k^w(P),$$ where $\Delta_k^w(P)$ is the cost of the optimal weighted $k$-means.
\end{theorem}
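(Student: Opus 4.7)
The plan is to adapt the iterative $\dset$-sampling algorithm of Jaiswal, Kumar and Sen~\cite{jaiswal2012simple} to the weighted setting. The algorithm chooses centers over $k$ rounds; at round $i$, given the already chosen centers $\cset_{i-1}$, it draws a sample $S_i\subset P$ of size $N=\tilde O(k/\epsilon)$ by \emph{weighted} $\dset$-sampling (each $p\in P$ picked with probability proportional to $w_p\cdot d(p,\cset_{i-1})^2$) and then tries \emph{every} subset $T\subseteq S_i$ as a candidate next center, inserting the weighted centroid of $T$ into the center set. This produces a tree of executions of branching factor $2^N=2^{\tilde O(k/\epsilon)}$ and depth $k$, so all leaves can be enumerated in time $nd\cdot 2^{\tilde O(k^2/\epsilon)}$; the algorithm outputs the $k$-tuple of lowest weighted cost.

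First I would prove the weighted analogue of Inaba's lemma: if $M$ is a multiset of size $\Omega(1/(\eta\epsilon))$ obtained from a cluster $P^*$ by picking each point $p$ with probability proportional to $w_p$, then with probability at least $1-\eta$ the weighted centroid $\mu(M)$ of $M$ satisfies $\Delta^w(P^*,\{\mu(M)\})\le (1+\epsilon)\,\Delta^w(P^*,\{\mu(P^*)\})$. This reduces to Markov's inequality applied to the standard bound $\mathbb{E}\bigl[\|\mu(M)-\mu(P^*)\|^2\bigr]\le \Delta^w(P^*,\{\mu(P^*)\})/(W(P^*)\cdot|M|)$, where $W(P^*)=\sum_{p\in P^*}w_p$; the argument is identical to the uniform case once one imagines each point $p$ replicated $w_p$ times (finite because $w_p\ge 1$ and rational weights can be rescaled without loss).

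Next I would run the inductive argument over the rounds. Assume that after round $i$ the current center set $\cset_i$ is $(1+\epsilon)$-good for $i$ of the $k$ optimal clusters $P_1^*,\dots,P_k^*$ and call the remaining clusters \emph{bad}. A bad cluster $P_j^*$ is \emph{heavy} if its weighted contribution $\sum_{p\in P_j^*}w_p\cdot d(p,\cset_i)^2$ is an $\Omega(\epsilon/k)$-fraction of the current potential; the aggregate contribution of the non-heavy bad clusters is absorbed into the target error. A heavy bad cluster is hit by weighted $\dset$-sampling with probability $\Omega(\epsilon/k)$ per draw, so with $N=\tilde O(k/\epsilon)$ samples we obtain $\Omega(1/\epsilon)$ draws from it with high probability, and the branch that selects exactly this subset as $T$ produces a centroid which, by the weighted Inaba lemma, is a $(1+\epsilon)$-good center for $P_j^*$.

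The main obstacle will be the step above, because the conditional distribution of a weighted $\dset$-sample restricted to $P_j^*$ is not weight-proportional: each $p\in P_j^*$ is reweighted by $d(p,\cset_i)^2$, which may vary across $P_j^*$. Following \cite{jaiswal2012simple}, I would split on the ratio $\rho=\Delta^w(P_j^*,\cset_i)/\Delta_1^w(P_j^*)$. If $\rho=O(1)$, a single weighted $\dset$-sample from $P_j^*$ is already within a constant factor of the desired center, and one more pass combined with a triangle-inequality argument yields the required $(1+\epsilon)$-guarantee. If $\rho=\omega(1)$, the factor $d(p,\cset_i)^2$ is nearly constant on $P_j^*$ up to a $(1+\epsilon)$-factor, so the conditional distribution is within $(1+\epsilon)$ of the weight-proportional one and the weighted Inaba lemma applies directly. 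The hypothesis $w_p\ge 1$ preserves the Markov and concentration bounds. Iterating for $k$ rounds and union-bounding over the levels of the tree produces, on some leaf, a $k$-tuple of cost at most $(1+\epsilon)\Delta_k^w(P)$; standard constant amplification boosts this to high probability, completing the proof of \cref{thm:ptaskmeans}.
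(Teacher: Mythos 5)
Your algorithmic skeleton and inductive invariant coincide with the paper's (\cref{alg:PTASkmeans} together with the per-round guarantee $\Delta^w(O_{m_l},c_l')\le(1+\epsilon/20)\,\Delta^w(O_{m_l},c_{m_l})$), and your ``weighted Inaba lemma'' is exactly what the paper gets by applying \cref{lem:indepSampling} to the multiset in which each $p$ is replicated $w_p$ times. The divergence---and the gap---is at the step you yourself identify as the main obstacle: the weighted $\dset$-sampling distribution conditioned on the target cluster $O_{m_i}$ is not weight-proportional. The paper does \emph{not} case-split on $\rho=\Delta^w(O_{m_i},\cset^{(i-1)})/\Delta^w_1(O_{m_i})$; it uses a rejection-sampling coupling (\cref{obs:atleastpr2}, \cref{lem:prlemma}, \cref{lem:main}): writing the probability of drawing $p$ as $\lambda(p)/W_{m_i}$ with $\lambda(p)\ge\gamma=\epsilon/(2k)$ via \cref{cor:atleastpr1}, it accepts a drawn point $p$ with probability $\gamma/\lambda(p)$, producing a ``uniform with probability $\gamma$, null otherwise'' sample of the replicated cluster, to which the unweighted Inaba bound applies. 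This coupling is also what Jaiswal et al.\ actually do; the $\rho$ case-split you attribute to them is not their argument.

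Both branches of your case split are problematic as stated. For $\rho=\omega(1)$, the claim that $d(p,\cset^{(i-1)})^2$ is constant on $O_{m_i}$ up to a $(1+\epsilon)$ factor is false: take $O_{m_i}$ to be weight $W-1$ at the origin plus weight $1$ at distance $D$, with the nearest current center at distance $L$ on the far side of the origin, $L\ll D$. Then $\Delta^w_1(O_{m_i})\approx D^2$ while $\Delta^w(O_{m_i},\cset^{(i-1)})\approx WL^2$, so $\rho$ is arbitrarily large for large $W$, yet $d(\cdot,\cset^{(i-1)})^2$ varies over the cluster by a factor $((L+D)/L)^2\gg 1+\epsilon$; even when all points are far, near-constancy up to $(1+\epsilon)$ needs the separation to exceed roughly $\mathrm{diam}(O_{m_i})/\epsilon$, which $\rho=\omega(1)$ does not give. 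For $\rho=O(1)$, ``one more pass combined with a triangle-inequality argument'' does not correspond to anything the algorithm can execute: within a round it only forms weighted centroids of size-$M$ subsets of a single $\dset$-sample, and you have not exhibited which subset yields a $(1+\epsilon/20)$-good center for $O_{m_i}$ when the conditional sample is skewed. To close the proof you need either the paper's coupling reduction to uniform sampling (which in turn rests on the per-point lower bound asserted in \cref{obs:atleastpr2}) or some other mechanism that flattens the skewed conditional distribution; you would also need to add the removal of the $(k,\epsilon)$-irreducibility assumption (\cref{thm:thm1}, \cref{thm:thm2}), which your sketch omits.
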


To ensure the above theorem, we design an algorithm that uses weighted \(\mathcal{D}^2\)-sampling. In each iteration of our algorithm (\cref{alg:PTASkmeans}), we draw a sample of points from $P$ using weighted \(\mathcal{D}^2\)-sampling, and then carefully select a candidate subset whose weighted centroid serves as a center. Through a carefully maintained invariant and probabilistic analysis adapted from the ideas of \cite{jaiswal2012simple}, we demonstrate that each selected center closely approximates one of the optimal centers. By repeating this process for all \(k\) centers, we obtain a solution whose total clustering cost is within a \((1 + \epsilon)\) factor of the optimum. An obstruction that we overcome to prove \cref{thm:ptaskmeans} is to show that we can convert a weighted point $p$ with weight $w_p$ to an unweighted point set by making $w_p$ copies of $p$ and still be able to sample $w_p$ such points when required. 

Further, as an application of the weighted $k$-means, we show that the sensor covering problem also admits a PTAS. Formally, we show the following theorem:

\begin{theorem}\label{thm:ptassensor}
     Let $\mathcal{C}=(c_1',c_2',\dots,c_k')$ be the $k$ centers for $\Delta^w(X,\cset)$ obtained using \cref{alg:PTASkmeans}. Then
$$H(\cset)\le(1+\epsilon)H(\cset^o).$$
\end{theorem}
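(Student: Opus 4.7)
The plan is to combine Deshpande's reduction of the sensor covering problem to weighted $k$-means~\cite{deshpande2014guaranteed} with our PTAS from Theorem~\ref{thm:ptaskmeans}. Once the reduction is unpacked, the proof collapses to a two-line composition, so the entire burden of the argument is to record the correspondence between $H$ and $\Delta^w$ carefully and then invoke the previous theorem.

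First, I would recall the reduction from~\cite{deshpande2014guaranteed}: given the sensor covering instance, one constructs the weighted point set $X$ (the same $X$ appearing in the statement of Theorem~\ref{thm:ptassensor}) with weights $\{w_x\}_{x \in X}$ derived from the environment/density, so that for \emph{every} configuration $\cset$ of $k$ sensors,
\[
H(\cset) = \Delta^w(X,\cset).
\]
Specialising this identity to the optimum yields $H(\cset^o) = \Delta_k^w(X)$. Before applying Algorithm~\ref{alg:PTASkmeans}, I would verify the hypothesis $w_x \ge 1$ of Theorem~\ref{thm:ptaskmeans}; if some weights are smaller, a uniform rescaling $w_x \mapsto \alpha w_x$ fixes the issue while multiplying both $H(\cset)$ and $H(\cset^o)$ by the same factor $\alpha$, so the approximation ratio is preserved.

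Next, I would apply Algorithm~\ref{alg:PTASkmeans} to $X$. By Theorem~\ref{thm:ptaskmeans}, the output $\cset=(c_1',\dots,c_k')$ satisfies $\Delta^w(X,\cset) \le (1+\epsilon)\,\Delta_k^w(X)$. Substituting the two cost identities from the reduction gives
\[
H(\cset) \;=\; \Delta^w(X,\cset) \;\le\; (1+\epsilon)\,\Delta_k^w(X) \;=\; (1+\epsilon)\,H(\cset^o),
\]
which is exactly the claim. The stated running time $n \cdot 2^{\tilde{O}(k^2/\epsilon)}$ follows from the running time of Algorithm~\ref{alg:PTASkmeans} together with the (polynomial) cost of constructing $X$.

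The main obstacle I anticipate is the faithfulness of the reduction, since sensor covering is naturally a continuous integral over a spatial domain while weighted $k$-means is a finite weighted sum. I would need to check that Deshpande's discretization (e.g.\ via centroids of a fine partition, or a symbolic rearrangement of the coverage integral) yields the identity $H(\cset) = \Delta^w(X,\cset)$ \emph{exactly}, so that no additive or multiplicative slack is introduced when invoking Theorem~\ref{thm:ptaskmeans}. If the reduction is only $(1+\delta)$-tight, I would instead run Algorithm~\ref{alg:PTASkmeans} with a smaller parameter $\epsilon'$ chosen so that $(1+\epsilon')(1+\delta) \le 1+\epsilon$; this only changes constants in the exponent of the running time. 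Since Deshpande obtains a clean $O(\log k)$-approximation via the same route, I expect the exact identity to hold and this step to be essentially bookkeeping.
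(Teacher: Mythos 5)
Your proposal is correct and follows essentially the same route as the paper: invoke Deshpande's reduction to weighted $k$-means on the discretized point set $X$, apply Algorithm~\ref{alg:PTASkmeans}, and transfer the $(1+\epsilon)$ guarantee back to $H$. The one correction is that Deshpande's identity is not exact but reads $H(\cset)=\Delta^w(X,\cset)+\sum_i J_{G_i\mid x_i}$; since the moment-of-inertia term is a nonnegative constant independent of $\cset$, it appears on both sides and the ratio transfers directly (this is cleaner than the multiplicative-slack fallback you sketched), and your observation that the weights $w_i=\int_{G_i}\phi$ must be rescaled to satisfy $w_p\ge 1$ is a legitimate point that the paper itself glosses over.
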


Here, $X$ is a weighted point set obtained by transforming the sensor covering problem to weighted points, $H(\cset)$ is the sensor coverage cost when the sensors are placed at locations $(c_1',c_2',\dots,c_k')$ and $H(\cset^o)$ denoted the cost of the optimal sensor coverage. These quantities are formally defined in \cref{sec:sensor}. To establish the above result, we leverage \cref{thm:ptaskmeans}, in combination with the result of Deshpande \cite{deshpande2014guaranteed} that the sensor coverage problem can be formulated as an instance of weighted $k$-means. This connection allows us to directly inherit the approximation guarantees of \cref{alg:PTASkmeans} to the sensor coverage problem.

\subsection{Related works}\label{sec:relatedwork}
A large body of work has focused on designing coreset-based algorithms for the \(k\)-means problem.  

Chen \cite{chen2006k} presented a PTAS that runs in time \(O(ndk + 2^{(k/\epsilon)^{O(1)}}d^2n^{\sigma})\). Feldman et al. \cite{feldman2007ptas} developed another coreset-based PTAS with a runtime of \(O(nkd + d \cdot \text{poly}(k/\epsilon) + 2^{\tilde{O}(k/\epsilon)})\). In \cite{cohen2021new}, the authors give a simple coreset for $k$-means of size $\tilde{O}(k /\epsilon^4)$. In \cite{huang2024optimal},  the authors gave an improved coreset for  $k$-means of size $\tilde{O}(k^{3/2}\epsilon^{-2})$.  Due to the vast literature on \(k\)-means clustering, we refer readers to the comprehensive surveys by Ahmed et al. \cite{ahmed2020k} and Feldman \cite{feldman2020core} and therein references for further details.

In addition, recent advancements in learning-augmented algorithms \cite{mitzenmacher2022algorithms} have motivated new approaches for \(k\)-means. Ergun et al. \cite{ergun2021learning} introduced the first learning-augmented \(k\)-means algorithm with an approximation ratio of \((1 + O(\alpha))\), where \(\alpha\) denotes the label error rate. Nguyen et al. \cite{nguyen2022improved} improved upon this by designing an algorithm with the same approximation guarantee that runs in time \(O(nd \log^3 (m / \alpha))\). Most recently, Huang et al. \cite{huangnew} proposed a faster learning-augmented algorithm achieving \((1 + O(\alpha))\)-approximation with a runtime of \(O(nd) + \tilde{O}(kd/\alpha)\).

For the sensor coverage problem, exact solutions in the centralized model have been studied by Okabe et al. \cite{okabe1997locational} and Cortés et al. \cite{cortes2004coverage}. Cortés et al. \cite{cortes2004coverage} proposed a distributed Lloyd gradient descent algorithm that guarantees convergence to a centroidal Voronoi configuration. Schwager et al. \cite{schwager2006distributed} extended this framework to handle generalized monotonic sensing functions, while Deshpande et al. \cite{deshpande2009distributed} studied location-dependent sensing in a distributed setting. For a detailed overview of the literature on sensor coverage, we refer the reader to the survey by Wang \cite{wang2011coverage}.

\subsection{Organization}
 In \cref{sec:prelim}, we introduce the necessary preliminaries, including definitions and supporting lemmas. \cref{sec:weightedPTAS} describes \cref{alg:PTASkmeans}, our proposed PTAS for the weighted \(k\)-means problem. The analysis of this algorithm, and the proof of \cref{thm:ptaskmeans}, is presented in \cref{sec:analysisalgo1}. In \cref{sec:sensor}, we discuss the sensor coverage problem in detail and prove \cref{thm:ptassensor}. In \cref{sec:conclusion}, we present the conclusions of our work, while a few missing proofs are deferred to \cref{sec:appendix}.

\section{Preliminaries}\label{sec:prelim}
Let $P \subset \RR$ be a point set (unweighted). Then, the centroid of $P$, denoted as $G(P)$ is defined as $G(P)=\frac{\sum_{p \in P} p}{|P|}$. Similarly, when $P$ is a weighted point set, the centroid of $P$ 
is defined as 

   $$G(P) =\frac{\sum_{p \in P} w_p \cdot p}{\sum_{p \in P} w_p}.$$
This definition can be easily extended for any $d$-dimensional point set $P$. We now define the weighted $\dset$-sampling, which we will use intensively throughout the paper.

\begin{definition} (Weighted $\dset$-sampling)
    Given a weighted point set $P \subset \RR^d$, the weighted $\dset$-sampling is defined as follows:
\begin{enumerate}

    \item Select the first center at random with probability proportional to $w_p$ and add it to the empty set $\cset'$.

    \item Select the next center with probability $\frac{ w_p \cdot d(p,\cset')^2}{\sum_{p \in P} w_p\cdot d(p,\cset')^2}$ and add it to the set $\cset'$. Here $\cset'$ is the set of centers selected so far.

    \item Repeat step 2, until $k$ centers are obtained in $\cset'$.  
    \end{enumerate}
\end{definition}

Next, we define the optimal $k$-means clustering for the weighted setting. For a weighted point set $P \subset \RR^d$, let $\Delta_k^w(P)$ denote the cost of the optimal weighted $k$-means. We also use $\Delta_k^w(P)$ to denote the optimal weighted clustering wherever required.  We now define the weighted irreducibility property, which helps us to prove \cref{thm:ptaskmeans}. Later, we will show how to remove this property.
 
\begin{definition}(Weighted Irreducibility)
    Given $k$, $\epsilon$, a set of weighted points $P$ with a weight function $w: P \to \RR_{> 0}$, is said to be $(k,\epsilon)$-irreducible if $\Delta_{k-1}^w(P) \ge (1+\epsilon)\Delta_k^w(P).$
    
\end{definition}
Next, we define some notations and a definition with respect to the optimal clustering. Given an optimal $k$-means clustering $\Delta_k^w(P)$, let $O_1, O_2, \dots, O_k$ be the optimal clusters for the point set $P$. Let $c_i$ be the weighted centroid of the points in $O_i$. Let $W_i$ denote the weight of the cluster $O_i$ i.e., $W_i=\sum_{p \in O_i} w_p$. Assume without loss of generality that $W_1 \ge W_2 \ge \dots \ge W_k$. Define $r_i$ to be the average cost incurred by the points in $O_i$, then $$r_i=
\frac{\sum_{p \in O_i}w_p.d(p,c_i)^2}{W_i}.$$

In the following lemma, we define a relationship for the weighted $k$-means with respect to $G(P)$ and for any random point in $\RR^d$. 

\begin{lemma}\label{lem:differentcenter}
    Let $P \subset \RR^d$ be a weighted point set, where each point $p$ is assigned a positive weight $w_p$, and let $W_P=\sum_{p \in P} w_p$. Let $c \in \RR^d$ be any point. Then, 
    
    $$\sum_{p \in P} w_p\cdot d(p,c)^2=\sum_{p \in P} w_p\cdot d(p, G(P))^2 + W_P\cdot d(c,G(P))^2.$$ 
\end{lemma}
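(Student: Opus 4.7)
The plan is to prove this by a direct expansion of the squared Euclidean distance around the weighted centroid $G(P)$, mirroring the classical parallel-axis / bias–variance decomposition but in its weighted form. First I would write, for each $p \in P$,
$$d(p,c)^2 = \|p - c\|^2 = \|(p - G(P)) + (G(P) - c)\|^2,$$
and expand the right-hand side using the inner product identity to obtain
$$\|p - G(P)\|^2 + 2 \langle p - G(P),\, G(P) - c \rangle + \|G(P) - c\|^2.$$

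Next I would multiply both sides by $w_p$ and sum over $p \in P$. The first term contributes exactly $\sum_{p \in P} w_p \cdot d(p, G(P))^2$, and the third term contributes $\left(\sum_{p \in P} w_p\right) \|G(P) - c\|^2 = W_P \cdot d(c, G(P))^2$, since $\|G(P) - c\|^2$ does not depend on $p$. The only step that actually requires a little argument is showing that the cross term vanishes.

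For the cross term, I would factor out the constant vector $G(P) - c$ from the inner product to get
$$\sum_{p \in P} 2 w_p \langle p - G(P),\, G(P) - c \rangle = 2 \left\langle \sum_{p \in P} w_p (p - G(P)),\ G(P) - c \right\rangle.$$
By the definition $G(P) = \frac{\sum_{p \in P} w_p \cdot p}{W_P}$, we have $\sum_{p \in P} w_p \cdot p = W_P \cdot G(P)$, and hence $\sum_{p \in P} w_p (p - G(P)) = W_P \cdot G(P) - W_P \cdot G(P) = \mathbf{0}$. So the cross term is zero and combining the surviving pieces yields the claimed identity.

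There is essentially no hard step here; the only thing to be careful about is that the argument works coordinate-wise in $\mathbb{R}^d$ (so the inner product really is the standard Euclidean one), and that the weighted centroid definition is used exactly once, in the vanishing of the cross term. No additional assumption on the weights beyond $w_p > 0$ (so that $W_P > 0$ and $G(P)$ is well defined) is needed.
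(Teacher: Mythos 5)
Your proposal is correct and follows essentially the same route as the paper's proof: expanding $\|p-c\|^2 = \|(p-G(P)) + (G(P)-c)\|^2$, summing with weights, and observing that the cross term vanishes because $\sum_{p \in P} w_p (p - G(P)) = 0$ by the definition of the weighted centroid. No gaps; nothing further to add.
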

\begin{proof}
    As $d(p,c)^2=||p-c||^2$, we can write L.H.S. as $\sum_{p \in P} w_p\cdot ||p-c||^2$. Again, $||p-c||^2 =||p-G(P) + G(P)-c||^2$, which we can write as $$||p-c||^2 =||p-G(P)||^2 + ||G(P)-c||^2 + 2\left< p-G(P),G(P)-c \right>.$$
    
    Considering all the points in $P$ and multiplying both sides by $w_p$ for every $p \in P$ we get

     $$\sum_{p \in P}w_p||p-c||^2 =\sum_{p \in P}w_p||p-G(P)||^2 + \sum_{p \in P}w_p||G(P)-c||^2 + 2\sum_{p \in P}w_p\left< p-G(P),G(P)-c \right>.$$

     As $G(P)$ is a weighted mean, $\sum_{p \in P}w_p (p-G(P))=0$. Thus, the term $$2\sum_{p \in P}w_p\left< p-G(P),G(P)-c \right>=0.$$

     Therefore, we have

    $$\sum_{p \in P}w_p||p-c||^2 =\sum_{p \in P}w_p||p-G(P)||^2 + \sum_{p \in P}w_p||G(P)-c||^2.$$

    As $\sum_{p \in P} w_p= W_P$, we have $\sum_{p \in P}w_p||p-c||^2 =\sum_{p \in P}w_p||p-G(P)||^2 + W_P||G(P)-c||^2$, which proves the lemma.
\end{proof}

The following lemma is from Inaba et al. \cite{inaba1994applications} and shows the power of uniform sampling.

\begin{lemma}(Inaba et al. \cite{inaba1994applications})\label{lem:indepSampling}
Let $S$ be a set of points obtained by independently sampling $M$ points from $P$ (unweighted) with replacement. Then, for $\delta > 0$, with probability at least $1- \delta$ we have, $$\Delta(P,G(S))\le \left(1+ \frac{1}{\delta M}\right)\Delta(P,G(P)).$$
    
\end{lemma}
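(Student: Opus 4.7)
The plan is to reduce the statement to a routine variance calculation plus Markov's inequality. The central observation will be that in the unweighted case (all $w_p = 1$), applying \cref{lem:differentcenter} with $c = G(S)$ gives the Pythagorean-style identity
$$\Delta(P, G(S)) = \Delta(P, G(P)) + |P|\cdot\|G(P)-G(S)\|^2,$$
so the whole task reduces to controlling the slack term $|P|\cdot\|G(P)-G(S)\|^2$.

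First, I would compute the expectation of this slack over the randomness of $S$. Writing $S = \{X_1,\dots,X_M\}$ for the i.i.d.\ uniform draws from $P$, the sample mean $G(S) = \frac{1}{M}\sum_{i=1}^M X_i$ is an unbiased estimator of $G(P)$, and its variance equals $1/M$ times the single-draw variance, which is exactly $\Delta(P,G(P))/|P|$. Multiplying through by $|P|$, the expected slack is $\Delta(P,G(P))/M$.

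Next, I would convert this expectation bound into the claimed tail bound via Markov's inequality on the non-negative random variable $|P|\cdot\|G(P)-G(S)\|^2$. This will give, with probability at least $1-\delta$, a slack of at most $\Delta(P,G(P))/(\delta M)$. Substituting this back into the Pythagorean identity from the first step yields exactly the factor $(1 + 1/(\delta M))$ claimed.

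I do not anticipate a real obstacle here: the whole argument is a short textbook calculation, and the lemma is itself cited verbatim from Inaba et al. The only mildly delicate point is recognising that sampling \emph{with replacement} is what makes the $X_i$'s independent and hence gives the tight $1/M$ factor in the variance of the sample mean; without that independence the clean decomposition of the variance into the population variance divided by $M$ would not hold, and one would need either a finite-population correction or a more careful martingale-style argument.
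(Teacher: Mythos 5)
Your proposal is correct and is exactly the standard argument of Inaba et al.\ (Pythagorean decomposition via the unweighted case of \cref{lem:differentcenter}, unbiasedness and the $1/M$ variance reduction of the sample mean, then Markov's inequality); the paper itself cites this lemma from \cite{inaba1994applications} without reproducing a proof, so there is nothing to diverge from. Your remark about with-replacement sampling being what makes the i.i.d.\ variance computation go through is also accurate.
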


\section{A Simple PTAS for weighted $k$-means}\label{sec:weightedPTAS}

In this section, we show that the weighted \( k \)-means admits a PTAS by using weighted \( \dset \)-sampling. 
Let \( \mathcal{C} \) denote the set of centers constructed by our algorithm. Initially, \( \mathcal{C} = \emptyset \). The algorithm iteratively builds $\mathcal{C}$ by adding one center at a time until \( |\mathcal{C}| = k \). At each iteration, given the current set of centers \( \mathcal{C} \) with \( |\mathcal{C}| < k \), we perform the following steps:

\begin{enumerate}
    \item \textbf{Weighted \( \dset \)-sampling}: We sample a subset \( S \subseteq P \) of size \( N = O(k/\epsilon^2) \), where the sampling is done using weighted \( \dset \)-sampling with respect to the current centers in \( \mathcal{C} \).

    \item \textbf{Enumerating candidate subsets:} From the sampled set \( S \) of size $N$, we consider all subsets \( T \subseteq S \) of size \( M = O(1/\epsilon) \). For each such subset \( T \), we compute its weighted centroid, denoted as \( G(T) \), and consider adding it to \( \mathcal{C} \) as a potential center. As the number of such subsets in each iteration is \( \binom{N}{M}^k = 2^{\tilde{O}(k/\epsilon)} \), one of these potential centers will be very close to one of the optimal centers. We denote such a center as our candidate center.
\end{enumerate}
 
To simplify the presentation, we preselect a set of \( k \)-tuples of candidate subsets whose centroid is our candidate center. We denote these candidate subsets by \( (s_1, s_2, \dots, s_k) \), where each \( s_i \) indexes a specific subset \( T_i \subseteq S \) of size \( M \). We refer to such \( k \)-tuples as our \emph{candidate tuples}. Our algorithm then constructs a clustering solution for each such candidate tuple by sequentially adding the centers \( G(s_1), G(s_2), \dots, G(s_k) \) to \( \mathcal{C} \). We present the full procedure below as Algorithm 1.

 \begin{algorithm}[H]
 \caption{A Simple PTAS for Weighted $k$-Means Clustering}
 \label{alg:PTASkmeans}
 \SetKwInOut{Input}{Input}\SetKwInOut{Output}{Output}
 \Input{Weighted point set \( P \subset \mathbb{R}^d \), number of clusters \( k \), accuracy parameter \( \epsilon > 0 \)}
 \Output{Set of centers \( \mathcal{C} \subset \mathbb{R}^d \) of size \( k \)}

 Set \( N =  800k/\epsilon^2 \), \( M =  100/\epsilon \), and enumerate all \( k \)-tuples of subsets \( (s_1, \dots, s_k) \in \left[\binom{N}{M}\right]^k \)\\

 Repeat the following for \( 2^k \) independent trials:\\
 \Indp
 Initialize \( \mathcal{C} \leftarrow \emptyset \)\\
 \For{\( i = 1 \) to \( k \)}{
     Sample a subset \( S \subseteq P \) of size \( N \) using weighted \( \dset \)-sampling w.r.t. \( \mathcal{C} \)\\
     Let \( T_i \) be the \( s_i^{\text{th}} \) subset of \( S \) of size \( M \)\\
     Compute the weighted centroid \( G(T_i) \), and update \( \mathcal{C} \leftarrow \mathcal{C} \cup \{G(T_i)\} \)
 }
 Retain the set \( \mathcal{C} \) with the minimum weighted \( k \)-means cost over all iterations
 \end{algorithm}

We now explain our algorithm in detail.
\begin{itemize}
    \item \textbf{Line 1:} Set up parameters for sampling and subset enumeration. The number of sampled points \( N \) and the size of each subset \( M \) are chosen to ensure the approximation guarantee.

    \item \textbf{Line 2:} Repeats the clustering procedure \( 2^k \) times to amplify the success probability of obtaining a good approximation.

    \item \textbf{Lines 4–7:} Iteratively build  \( \mathcal{C} \) (See \cref{fig:algo1}). In each iteration:
    \begin{itemize}
        \item  A set \( S \) of \( N \) points is sampled using weighted \( \dset \)-sampling.

        \item A candidate subset \( T_i=s_i \) of size \( M \) is selected (based on a pre-specified tuple \( (s_1, \dots, s_k) \)).

        \item The weighted centroid of \( T_i \) is added as a new center.
    \end{itemize}
    
\end{itemize}

\begin{figure}
    \centering

    \begin{subfigure}[b]{0.47\textwidth}
        \centering
        \includegraphics[width=\textwidth]{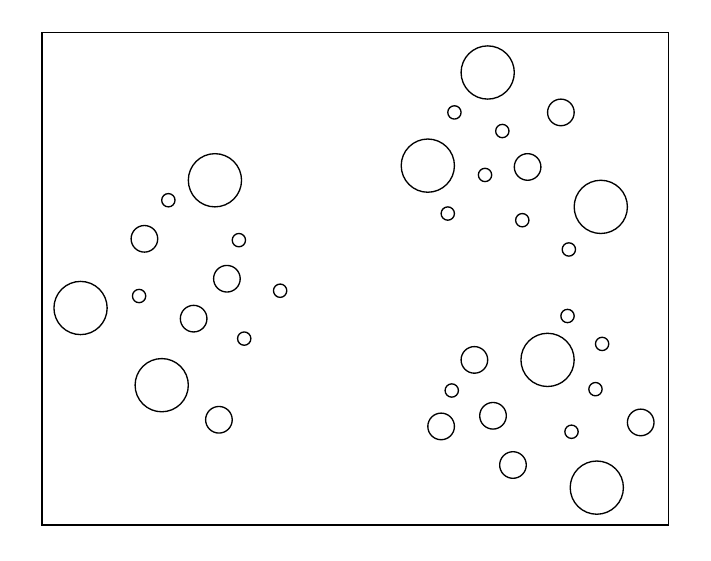}
        \caption{Weighted Point Set $P$ before the start of \cref{alg:PTASkmeans}.}
       \label{fig:k1}
    \end{subfigure}
    \hspace{.5cm}
    \begin{subfigure}[b]{0.47\textwidth}
        \centering
        \includegraphics[width=\textwidth]{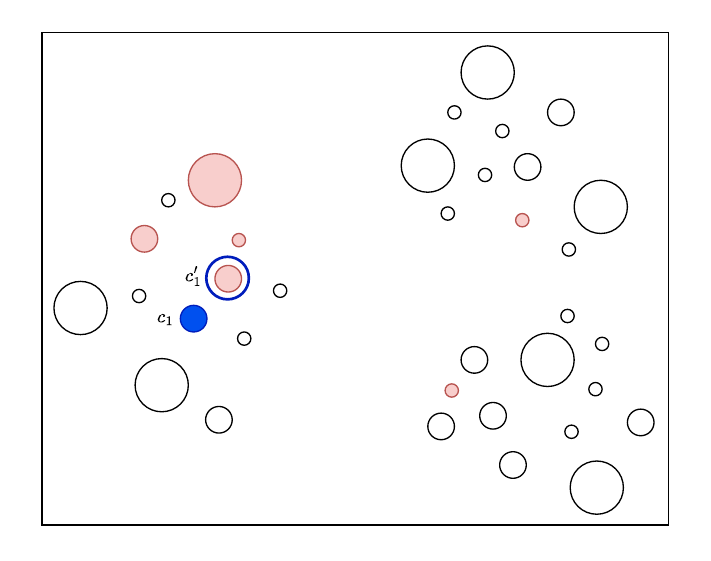}
        \caption{Iteration 1 of  \cref{alg:PTASkmeans}: Pink points are $T_1$, $c_1'=G(T_1)$, and $c_1$ is an optimal center.}
        \label{fig:k2}
    \end{subfigure}
     \hspace{.5cm}
     \begin{subfigure}[b]{0.47\textwidth}
        \centering
        \includegraphics[width=\textwidth]{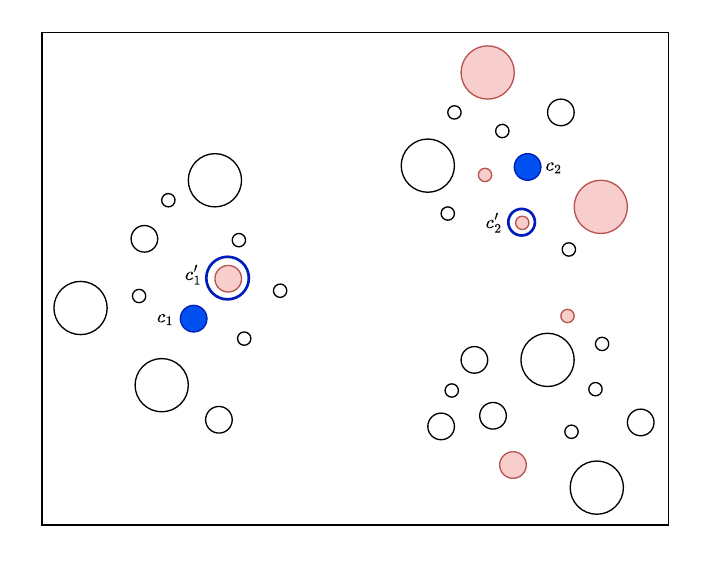}
        \caption{Iteration 2 of  \cref{alg:PTASkmeans}: Pink points are $T_2$, $c_2'=G(T_2)$, and $c_2$ is an optimal center.}
        \label{fig:k3}
    \end{subfigure}
    \hspace{.5cm}
     \begin{subfigure}[b]{0.47\textwidth}
        \centering
        \includegraphics[width=\textwidth]{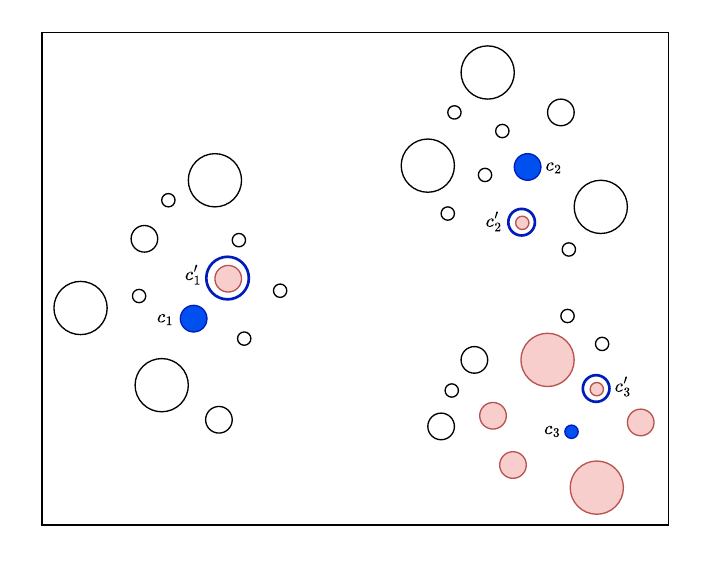}
        \caption{Iteration 3 of  \cref{alg:PTASkmeans}: Pink points are $T_3$, $c_3'=G(T_3)$, and $c_3$ is an optimal center.}
        \label{fig:k4}
    \end{subfigure}
    \hspace{.5cm}
     \begin{subfigure}[b]{0.47\textwidth}
        \centering
        \includegraphics[width=\textwidth]{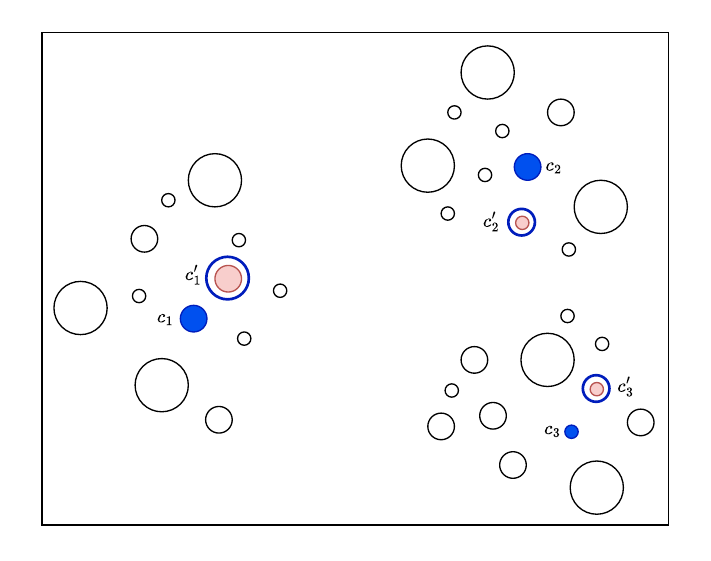}
        \caption{Point set $P$ after the end of \cref{alg:PTASkmeans}.}
        \label{fig:k5}
    \end{subfigure}
    
    \caption{Illustration of  \cref{alg:PTASkmeans}. We have considered $k=3$. In all the above figures, the radius of the points represents their associated weights. The above iteration resembles the \textbf{For Loop}, i.e., lines 4-8. This is exhaustively searched upon a set of $2^{k}$ trials (see line 2), and finally, the best centers are returned (see line 9).}
    \label{fig:algo1}
\end{figure}

Note that the total number of candidate tuples is \( \binom{N}{M}^k = 2^{\tilde{O}(k/\epsilon)} \). For each such tuple, we do centroid computations in time \( \tau nd \) where $\tau$ is some large constant. Repeating this for \( 2^k \) trials, the total running time of  \cref{alg:PTASkmeans} is $\tau'nd\cdot 2^{\tilde{O}(k/\epsilon)}$, where $\tau'$ is a large constant.

\subsection{Analysis of \cref{alg:PTASkmeans}}\label{sec:analysisalgo1}

In this section, we show that the weighted $k$-means algorithm ( \cref{alg:PTASkmeans}) using weighted $\dset$-sampling admits a PTAS. To this end, we first show that any two optimal centers for weighted $k$-means are far apart.

\begin{lemma}\label{lem:distanceoptimalcenter}
    For $1 \le i,j \le k$, and $i \neq j$ we have $d(c_i,c_j)^2 \ge \epsilon(r_i + r_j)$.
\end{lemma}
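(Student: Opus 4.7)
The plan is to prove this by a standard merge argument that exploits the weighted $(k,\epsilon)$-irreducibility property (which is implicit here; without it the statement cannot hold in general). The idea is: if two optimal centers were too close, we could merge their clusters into one cluster served by a single center, producing a $(k-1)$-clustering whose cost is only slightly larger than $\Delta_k^w(P)$, contradicting irreducibility.

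Concretely, fix $i \neq j$ and without loss of generality assume $W_i \le W_j$. Construct a candidate $(k-1)$-clustering by keeping the optimal centers $c_\ell$ for $\ell \notin \{i,j\}$ and using the single center $c_j$ to serve both $O_i$ and $O_j$. The cost of $O_j$ under $c_j$ is still $W_j r_j$, and by \cref{lem:differentcenter} applied to $O_i$ with center $c_j$ in place of its centroid $c_i$, the cost of $O_i$ under $c_j$ is
\[
\sum_{p\in O_i} w_p\, d(p,c_j)^2 \;=\; W_i r_i + W_i\, d(c_i,c_j)^2.
\]
Summing over all clusters and comparing with $\Delta_k^w(P) = \sum_\ell W_\ell r_\ell$, this candidate solution has cost at most $\Delta_k^w(P) + W_i\, d(c_i,c_j)^2 = \Delta_k^w(P) + \min(W_i,W_j)\, d(c_i,c_j)^2$. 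Since $\Delta_{k-1}^w(P)$ is the minimum $(k-1)$-clustering cost, it is upper bounded by the same quantity.

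Now invoke $(k,\epsilon)$-irreducibility, $\Delta_{k-1}^w(P) \ge (1+\epsilon)\Delta_k^w(P)$, to get
\[
\epsilon\, \Delta_k^w(P) \;\le\; \min(W_i,W_j)\, d(c_i,c_j)^2.
\]
Finally, observe that $\Delta_k^w(P) \ge W_i r_i + W_j r_j \ge \min(W_i,W_j)(r_i + r_j)$, where the last inequality uses that the smaller weight multiplies both $r_i$ and $r_j$. Dividing through by $\min(W_i,W_j)$ yields the desired bound $d(c_i,c_j)^2 \ge \epsilon(r_i + r_j)$.

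The main obstacle is largely bookkeeping: choosing the right center (the one of larger weight, so the displacement is multiplied by the smaller weight) to make the merge cost as small as possible, and then correctly lower-bounding $\Delta_k^w(P)$ by a multiple of $(r_i+r_j)$ using the same smaller weight so that the two sides telescope cleanly. No probabilistic or geometric tools beyond \cref{lem:differentcenter} and the irreducibility definition are needed.
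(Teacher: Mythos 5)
Your proof is correct and follows essentially the same route as the paper's: merge $O_i$ and $O_j$ onto a single optimal center chosen so that, via \cref{lem:differentcenter}, the extra cost is $\min(W_i,W_j)\,d(c_i,c_j)^2$, then invoke $(k,\epsilon)$-irreducibility (which, as you rightly note, is an implicit standing assumption here). The paper phrases it as a local contradiction on $\Delta^w(O_i\cup O_j,\{c_i\})$ while you argue directly through $\Delta_{k-1}^w(P)$ and the bound $\Delta_k^w(P)\ge \min(W_i,W_j)(r_i+r_j)$, but the underlying argument is identical.
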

\begin{proof}
    Assume $i<j$, then $W_i \ge W_j$ by our initial assumption. Assume for contradiction that $d(c_i,c_j)^2 < \epsilon (r_i + r_j)$. Then,
    
\begin{align*}
    \Delta^w(O_i \cup O_j,\{c_i\})&=W_ir_i+ W_jr_j + W_j.d(c_i,c_j)^2 \hspace{13pt}\text{(using Lemma \ref{lem:differentcenter})}\\
 & \le W_ir_i + W_jr_j + W_j.\epsilon.(r_i+ r_j)\\
 & \le (1+\epsilon).W_ir_i + (1+\epsilon)W_jr_j \hspace{20pt}\text{(as $W_i\ge W_j$)}\\
 & \le (1+\epsilon) \Delta^w(O_i \cup O_j, \{c_i,c_j\})\\
\end{align*}

But this contradicts the assumption that $P$ is $(\epsilon,k)$-irreducible. Hence, $d(c_i,c_j)^2 \ge \epsilon (r_i + r_j)$.
    
\end{proof}

To establish that \cref{alg:PTASkmeans} admits a PTAS, it is essential to show that the centers selected by the algorithm remain sufficiently close to the corresponding centers in an optimal solution. To prove this, we maintain an \emph{invariant} (same as \cite{jaiswal2012simple} but for the weighted case): we assume that after \( i-1 \) iterations, the current set of centers \( \mathcal{C}^{(i-1)} = (c_1', c_2', \dots, c_{i-1}') \) constructed by \cref{alg:PTASkmeans} approximates a subset of the optimal centers. Our goal is then to show that the invariant continues to hold after the \( i \)-th center \( c_i' \) is added. 

Formally, let \( \mathcal{C}^{(i-1)} = \{c_1', \dots, c_{i-1}'\} \) denote the centers selected till iteration \( i-1 \). Then, with probability at least \( 1/2^i \), there exist distinct indices \( m_1, m_2, \dots, m_{i-1} \in [k] \) such that for all \( l = (1, \dots, i-1) \), the following holds:

$$ \Delta^w(O_{m_l}, c_l') \leq (1 + \epsilon/20) \cdot \Delta^w(O_{m_l}, c_{m_l}),$$
where \( O_{m_l} \subseteq P \) is the set of points assigned to the optimal center \( c_{m_l} \). We should now prove that this invariant also holds at iteration \( i \), i.e.,

$$ \Delta^w(O_{m_i}, c_i') \leq (1 + \epsilon/20) \cdot \Delta^w(O_{m_i}, c_{m_i}), $$
with probability at least \( 1/2^i \). This will eventually prove that after the $k$-th center is added we have a $(1+\epsilon)$-approximation with probability at least $1/2^k$. In the rest of this section, we show that the invariant holds at iteration $i$.

Let $l \in \{1, \dots, i-1\}$ and define $M' = \{m_1, \dots, m_{i-1}\}$ as the indices of the optimal clusters that have already been approximated by the centers in $\mathcal{C}^{(i-1)}$. In the next lemma, we show that for any index $l \notin M'$ (i.e., $O_i, \dots, O_k$), there is a good chance that points from $O_l$ are sampled in the $i$-th iteration of \cref{alg:PTASkmeans}.

\begin{lemma}\label{lem:cost}
(Same as Lemma 5 in \cite{jaiswal2012simple}, extended to the weighted setting.)

$$
\frac{\sum_{l \notin M'} \Delta^w(O_l, \mathcal{C}^{(i-1)})}{\sum_{l = 1}^k \Delta^w(O_l, \mathcal{C}^{(i-1)})} \ge \frac{\epsilon}{2}.
$$

\end{lemma}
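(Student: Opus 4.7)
The plan is to mirror Jaiswal, Kumar, and Sen's proof of their Lemma~5, substituting weighted analogues at each step. The key observation is that the inequality can be rearranged into a statement comparing the covered clusters' contribution against the total, which in turn can be controlled using (a) the inductive invariant on the indices in $M'$, and (b) the $(k,\epsilon)$-irreducibility of $P$.

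First, I would upper-bound the covered contribution. For each $l \in M'$, write $l = m_j$ for some $j \in \{1,\dots,i-1\}$. Since $c_j' \in \mathcal{C}^{(i-1)}$ and each point of $O_{m_j}$ is assigned in $\Delta^w(O_{m_j}, \mathcal{C}^{(i-1)})$ to its nearest center in $\mathcal{C}^{(i-1)}$ (which is at least as close as $c_j'$), the inductive invariant yields
$$\Delta^w(O_l, \mathcal{C}^{(i-1)}) \;\le\; \Delta^w(O_l, c_j') \;\le\; (1+\epsilon/20)\,\Delta^w(O_l, c_l).$$
Summing over $l \in M'$ gives
$$\sum_{l \in M'} \Delta^w(O_l, \mathcal{C}^{(i-1)}) \;\le\; (1+\epsilon/20)\sum_{l \in M'} \Delta^w(O_l, c_l) \;\le\; (1+\epsilon/20)\,\Delta_k^w(P).$$

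Next, I would lower-bound the full denominator via irreducibility. Because $|\mathcal{C}^{(i-1)}| = i-1 \le k-1$, the quantity $\sum_{l=1}^{k} \Delta^w(O_l, \mathcal{C}^{(i-1)})$ is exactly the weighted clustering cost of $P$ on the $(i-1)$-center set $\mathcal{C}^{(i-1)}$, which is at least $\Delta_{i-1}^w(P) \ge \Delta_{k-1}^w(P)$. Applying $(k,\epsilon)$-irreducibility,
$$\sum_{l=1}^{k} \Delta^w(O_l, \mathcal{C}^{(i-1)}) \;\ge\; \Delta_{k-1}^w(P) \;\ge\; (1+\epsilon)\,\Delta_k^w(P).$$

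Finally, combining these two inequalities,
$$\frac{\sum_{l \notin M'} \Delta^w(O_l, \mathcal{C}^{(i-1)})}{\sum_{l=1}^{k} \Delta^w(O_l, \mathcal{C}^{(i-1)})} \;=\; 1 - \frac{\sum_{l \in M'} \Delta^w(O_l, \mathcal{C}^{(i-1)})}{\sum_{l=1}^{k} \Delta^w(O_l, \mathcal{C}^{(i-1)})} \;\ge\; 1 - \frac{(1+\epsilon/20)\Delta_k^w(P)}{(1+\epsilon)\Delta_k^w(P)} \;=\; \frac{19\epsilon/20}{1+\epsilon}.$$
For $\epsilon \le 9/10$ (which is consistent with the PTAS regime) this quantity is at least $\epsilon/2$, yielding the claim. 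I do not foresee a real obstacle here: the only subtlety is making sure the first step correctly uses that the nearest center in $\mathcal{C}^{(i-1)}$ can only be closer than the specific $c_j'$ paired with $O_{m_j}$ by the invariant, so no point is double-counted and the inequality transfers cleanly from the invariant (which is stated for a single center) to the full center set.
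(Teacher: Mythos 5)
Your proof is correct and is essentially the paper's argument run forwards instead of by contradiction: the paper assumes the ratio is below $\epsilon/2$, uses the invariant on the indices in $M'$ to conclude $\Delta^w(P,\mathcal{C}^{(i-1)}) \le \frac{1+\epsilon/20}{1-\epsilon/2}\sum_{m\in M'}\Delta^w_1(O_m) \le (1+\epsilon)\Delta_k^w(P)$, and contradicts $(k,\epsilon)$-irreducibility, which is exactly your pair of bounds (invariant to cap the covered mass, irreducibility to floor the total) combined contrapositively, with the same implicit restriction $\epsilon \le 9/10$ on the constants.
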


Since the proof closely follows that of Lemma 5 in \cite{jaiswal2012simple}, we include it in ~\cref{sec:appendix} for completeness.

Now, let $m_i$ be the index such that $m_i \notin M'$ and for which $\Delta(O_{m_i},\cset^{(i-1)})$ is maximum. The following corollary, which is a consequence of ~\cref{lem:cost}, shows that there is a good chance of sampling points from the cluster $O_{m_i}$ in the $i$-th iteration.

\begin{corollary}\label{cor:atleastpr1}

$$
\frac{\Delta^w(O_{m_i}, \mathcal{C}^{(i-1)})}{\sum_{l = 1}^k \Delta^w(O_l, \mathcal{C}^{(i-1)})} \ge \frac{\epsilon}{2k}.
$$

\end{corollary}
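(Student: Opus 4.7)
The plan is to derive the corollary as an immediate consequence of \cref{lem:cost} combined with the extremal choice of the index $m_i$. First I would observe that, by definition, $m_i$ is chosen from $[k] \setminus M'$ so as to maximize $\Delta^w(O_l, \mathcal{C}^{(i-1)})$ over all $l \notin M'$. Since $|M'| = i-1$, the set $[k]\setminus M'$ has size $k-i+1$, which is at most $k$.

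Next, I would apply a straightforward averaging (pigeonhole) argument: the maximum of a collection of nonnegative numbers is at least their average. Concretely,
$$\Delta^w(O_{m_i}, \mathcal{C}^{(i-1)}) \;=\; \max_{l \notin M'} \Delta^w(O_l, \mathcal{C}^{(i-1)}) \;\geq\; \frac{1}{k-i+1}\sum_{l \notin M'}\Delta^w(O_l, \mathcal{C}^{(i-1)}) \;\geq\; \frac{1}{k}\sum_{l \notin M'}\Delta^w(O_l, \mathcal{C}^{(i-1)}).$$

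Dividing both sides by $\sum_{l=1}^k \Delta^w(O_l,\mathcal{C}^{(i-1)})$ and invoking \cref{lem:cost} for the ratio on the right yields
$$\frac{\Delta^w(O_{m_i},\mathcal{C}^{(i-1)})}{\sum_{l=1}^k \Delta^w(O_l,\mathcal{C}^{(i-1)})} \;\geq\; \frac{1}{k}\cdot\frac{\epsilon}{2} \;=\; \frac{\epsilon}{2k},$$
which is precisely the stated inequality. There is no real obstacle here; the only mild thing to be careful about is to note explicitly that the denominator $\sum_{l=1}^k \Delta^w(O_l,\mathcal{C}^{(i-1)})$ is strictly positive (otherwise the statement is vacuous, but in that case the algorithm has already achieved zero cost and the invariant holds trivially), so the division is well-defined. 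Everything else is a one-line application of the max-is-at-least-average bound feeding into \cref{lem:cost}.
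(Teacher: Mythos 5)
Your proof is correct and matches the argument the paper intends (the paper leaves the corollary's proof implicit, but the max-is-at-least-average step over the at most $k$ indices outside $M'$, combined with \cref{lem:cost}, is exactly the intended derivation). The remark about the denominator being strictly positive is a reasonable bit of extra care but not a substantive difference.
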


Combining ~\cref{lem:cost} and ~\cref{cor:atleastpr1}, we conclude that with probability at least $\epsilon / (2k)$, the $i$-th iteration of the algorithm will sample points from an optimal cluster $O_{m_i}$ that has not yet been approximated. This ensures continued progress in covering the remaining optimal clusters.

\begin{observation}\label{obs:atleastpr2}
    Assume that each point $p \in P$ has weight $w_p \ge 1$. For any $l \notin M'$ and point $p \in O_l$, the probability that $p$ is sampled from $O_l$  is at least $\frac{1}{W_l}$.
\end{observation}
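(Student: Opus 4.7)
The plan is to prove Observation 2 via the virtual-copy reduction that the authors explicitly flag as the key conceptual obstacle in the introduction. Because every weight satisfies $w_p\ge 1$, one may replace each $p\in P$ by $w_p$ unit-weight copies co-located at $p$; the optimal cluster $O_l$ then becomes an unweighted multiset $\tilde O_l$ of cardinality $W_l=\sum_{p\in O_l} w_p$, and every $p\in O_l$ contributes at least one copy to $\tilde O_l$.

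The first step is to check that the weighted $\dset$-sampling on $(P,w)$ with respect to $\cset^{(i-1)}$ is probabilistically indistinguishable from unweighted $\dset$-sampling on the virtual multiset. Under the unweighted rule a single fixed copy of $p$ is drawn with probability $d(p,\cset^{(i-1)})^2/\sum_{q\in P} w_q\, d(q,\cset^{(i-1)})^2$, and summing this over the $w_p$ identical copies of $p$ reproduces exactly the weighted probability $w_p\, d(p,\cset^{(i-1)})^2/\sum_{q\in P} w_q\, d(q,\cset^{(i-1)})^2$. Hence anything established about drawing an individual virtual copy transfers verbatim to a statement about drawing a point of $P$.

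Conditioning next on the drawn sample lying in $\tilde O_l$, I would observe that all $w_p$ copies of a fixed $p\in O_l$ sit at identical positions and hence receive identical $\dset$-weight, so their contribution to the conditional mass is symmetric and equal to $w_p$ times the per-copy share. The per-copy share is at least $1/W_l$ by a direct counting argument on the $W_l$ co-located copies that make up $\tilde O_l$; combining with $w_p\ge 1$ yields that the probability the sampled point is $p$ is at least $w_p/W_l\ge 1/W_l$, which is exactly the asserted bound.

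The main obstacle, as flagged by the authors, is the bookkeeping of the virtual-copy reduction itself, since it is the new ingredient compared to the unweighted analysis of Jaiswal et al.\ that the paper extends; once the equivalence between weighted sampling on $P$ and unit-weight sampling on $\tilde P$ is pinned down, the bound $1/W_l$ falls out of the assumption $w_p\ge 1$ together with a trivial pigeonhole-style count on the $W_l$ copies of $\tilde O_l$.
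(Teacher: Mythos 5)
Your first step is sound: weighted $\dset$-sampling on $(P,w)$ is distributionally identical to unweighted $\dset$-sampling on the virtual multiset obtained by replacing each $p$ with $w_p$ co-located unit-weight copies, since summing the per-copy probability over the $w_p$ copies recovers $w_p\,d(p,\mathcal{C}^{(i-1)})^2/\sum_{q}w_q\,d(q,\mathcal{C}^{(i-1)})^2$. The gap is in the next step. After conditioning on the sample landing in $\tilde O_l$, the distribution over the $W_l$ virtual copies is still proportional to $d(\cdot,\mathcal{C}^{(i-1)})^2$ --- that is precisely what your own first step establishes --- and copies belonging to \emph{distinct} points of $O_l$ are not co-located, so no symmetry or pigeonhole count assigns each copy a share of at least $1/W_l$. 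Concretely, the per-copy conditional probability for a copy of $p$ is $d(p,\mathcal{C}^{(i-1)})^2/\sum_{q\in O_l} w_q\, d(q,\mathcal{C}^{(i-1)})^2$, and asserting this is at least $1/W_l$ is equivalent to claiming that every point of $O_l$ has squared distance to the current centers at least the weighted average of these squared distances over $O_l$ --- false whenever the distances are not all equal, and a point of $O_l$ lying near an already-chosen center has conditional share arbitrarily close to $0$. Your phrase ``the $W_l$ co-located copies that make up $\tilde O_l$'' suggests the error: only the $w_p$ copies of a single point are co-located, not all $W_l$ copies of the cluster.

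For context, the paper states \cref{obs:atleastpr2} without any proof, and the only reading under which the claimed bound is immediate is the one where the point is drawn from $O_l$ \emph{uniformly} over the multiset of copies (equivalently, with probability proportional to $w_p$ alone), in which case $\Pr[p]=w_p/W_l\ge 1/W_l$ follows directly from $w_p\ge 1$. That is the role the observation ultimately plays inside \cref{lem:main}, where a rejection step converts the $\dset$-sample into a uniform sample before \cref{lem:prlemma} is invoked; the bound $1/W_l$ is not a property of the $\dset$-conditional distribution itself, and your attempt to derive it for that distribution cannot be repaired by bookkeeping on the virtual copies alone.
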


Recall that each point \( p \in P \) has an associated weight \( w_p \). One might consider reducing the weighted \(k\)-means problem to the unweighted setting by creating \( w_p \) copies of each point \( p \), treating them as distinct but colocated unweighted points. While this transformation is valid in theory, it does not necessarily preserve the same clustering cost when used with \(\dset\)-sampling for the unweighted case. Specifically, unweighted \(\dset\)-sampling may over or undersample points with high or low weights, leading to suboptimal center selection. In contrast, weighted \(\dset\)-sampling guarantees that the probability of selecting a point is proportional to its weight and squared distance to the nearest center, effectively capturing the intended importance of each point. To this end, to address this issue and facilitate the analysis, in the next two lemmas, we describe how to simulate our weighted \(\dset\)-sampling procedure within an unweighted framework but still obtain a good approximation.

\begin{lemma}\label{lem:prlemma}
Let $P'$ be an unweighted point set. Consider the following experiment:
\begin{itemize}
    \item with probability $\gamma$, a point from $P'$ is selected uniformly at random.
    \item with probability $1 - \gamma$, the experiment results in a null sample.
\end{itemize}
Repeat this experiment $\ell = \frac{400}{\gamma \varepsilon}$ times, obtaining a multiset $T$. Then, with probability at least $3/4$, $T$ contains a subset $U$ of size $\frac{100}{\varepsilon}$ such that:
$$
\Delta(P', G(U)) \leq \left(1 + \frac{\varepsilon}{20}\right) \cdot \Delta(P',G(P')).$$   
\end{lemma}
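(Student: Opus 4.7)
The plan is to reduce the claim to Inaba et al.'s Lemma~\ref{lem:indepSampling} applied to the non-null outcomes of the experiment. The key observation is that a single trial, conditioned on yielding a non-null sample, produces a uniformly random point of $P'$; since the $\ell$ trials are independent, the ordered sequence of non-null outcomes is an i.i.d.\ uniform sample from $P'$. The proof then proceeds in two steps: (i) show that with high probability at least $M = 100/\varepsilon$ of the $\ell$ trials are non-null, so that $T$ contains a subset $U$ of that size consisting of i.i.d.\ uniform samples; and (ii) apply Lemma~\ref{lem:indepSampling} to $U$ to obtain the claimed approximation of $G(P')$.

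For step (i), let $X$ count the non-null outcomes; then $X$ is a sum of $\ell = 400/(\gamma\varepsilon)$ independent $\mathrm{Bernoulli}(\gamma)$ variables, so $\mathbb{E}[X] = 400/\varepsilon$. A standard multiplicative Chernoff bound (lower tail with deviation $\delta = 3/4$) yields
$$\Pr[X < 100/\varepsilon] \;=\; \Pr\!\left[X < (1-\tfrac{3}{4})\,\mathbb{E}[X]\right] \;\le\; \exp\!\left(-\tfrac{(3/4)^2}{2}\cdot\tfrac{400}{\varepsilon}\right),$$
which is comfortably below $1/20$ for any $\varepsilon \in (0,1)$. Hence, with probability at least $19/20$, the multiset $T$ contains at least $M = 100/\varepsilon$ non-null points, and we designate $U$ to be the first $M$ of them.

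For step (ii), condition on the event of step~(i) and apply Lemma~\ref{lem:indepSampling} to $U$, which by the opening observation is an i.i.d.\ uniform sample of size $M=100/\varepsilon$ drawn with replacement from $P'$. Choosing $\delta = 1/5$ in that lemma gives
$$\Delta(P', G(U)) \;\le\; \left(1 + \tfrac{1}{\delta M}\right)\Delta(P', G(P')) \;=\; \left(1 + \tfrac{\varepsilon}{20}\right)\Delta(P', G(P'))$$
with probability at least $4/5$. A union bound over the two failure events yields overall success probability at least $1 - 1/20 - 1/5 = 3/4$, matching the statement. I do not anticipate a substantial obstacle here: once one identifies that the non-null samples behave as i.i.d.\ uniform draws, the argument is a clean combination of Chernoff concentration with a black-box invocation of Inaba et al.'s bound. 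The only care needed is in bookkeeping the constants ($\ell = 400/(\gamma\varepsilon)$ and $M = 100/\varepsilon$, together with failure budgets $1/20$ for Chernoff and $1/5$ for Inaba) so that they telescope to exactly $3/4$.
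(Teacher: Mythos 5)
Your proposal is correct and follows essentially the same route as the paper's proof: a Chernoff lower-tail bound on the number of non-null trials to guarantee at least $M=100/\varepsilon$ i.i.d.\ uniform samples, followed by a black-box application of Lemma~\ref{lem:indepSampling}. If anything, your bookkeeping is tighter than the paper's, which charges the full $1/4$ failure budget to the Chernoff step and leaves no explicit allowance for the failure probability $\delta$ of Inaba's lemma; your split of $1/20$ for concentration and $\delta=1/5$ for Inaba makes the constants telescope to exactly $3/4$ as claimed.
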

 
\begin{proof}
    We perform $\ell = \frac{400}{\gamma \varepsilon}$ independent experiments as described below.

Let $T$ be the random variable denoting the number of successful (non-null) draws of the experiment. Then:
$$ \mathbb{E}[T] = \ell \cdot \gamma = \frac{400}{\varepsilon}.$$

By applying the Chernoff bound to $T$ (which is a sum of independent Bernoulli trials):

$$\mathbb{P}\left(T < \frac{1}{4}\mathbb{E}[T]\right) \leq e^{\left(-\mathbb{E}[T].(3/4)^2/2 \right)} < \frac{1}{4}.$$
Thus, with a probability of at least $3/4$, we have:

$$ |T| \geq \frac{100}{\varepsilon}.$$

But the expected value of $|T|$ is $\frac{400}{\epsilon}$, so with probability at least $3/4$, T contains a subset  $|U| \ge \frac{100}{\epsilon}$. Now, we prove the statement by applying \cref{lem:indepSampling}.
\end{proof}

Let $S^{(i)}$ be the sample of size $N$ in the $i$-th iteration. We now show that the invariant also holds for $\cset^i$.

\begin{lemma}\label{lem:main}
    With probability at least 3/4, there exists a subset $T^{(i)}$ of $S^{(i)}$ of size $\frac{100}{\epsilon}$ such that

$$\Delta^w(O_{m_i}, G(T^{(i)})) \le \left( 1 + \frac{\epsilon}{20} \right) \Delta^w_1(O_{m_i}).$$
\end{lemma}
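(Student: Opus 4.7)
The plan is to reduce the weighted setting to an unweighted one by point duplication, and then apply \cref{lem:prlemma} (which internally invokes Inaba et al.'s \cref{lem:indepSampling}). Concretely, let $P'$ be the unweighted multiset obtained by replacing each $p \in P$ with $w_p$ identical copies (this is meaningful under the assumption $w_p \ge 1$ used in \cref{obs:atleastpr2}). Weighted $\dset$-sampling from $P$ and plain $\dset$-sampling from $P'$ induce the same distribution on the drawn point, and the weighted centroid $G(T)$ of a multiset $T \subseteq P$ coincides with the ordinary centroid of the corresponding copies $T' \subseteq P'$. Writing $O'_{m_i}$ for the duplicated version of $O_{m_i}$, and noting that its ordinary centroid equals $c_{m_i}$ with $\Delta(O'_{m_i}, c_{m_i}) = \Delta^w_1(O_{m_i})$, the lemma reduces to showing that with probability at least $3/4$ the $N$-sample $S^{(i)}$ contains a sub-multiset $U$ of size $100/\epsilon$ satisfying
$$\Delta(O'_{m_i}, G(U)) \le (1+\epsilon/20)\,\Delta(O'_{m_i}, G(O'_{m_i})).$$

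To reach a position where \cref{lem:prlemma} applies, I would read each draw as the two-stage experiment in that lemma: with probability at least $\gamma$, a uniformly random copy from $O'_{m_i}$ is produced, and otherwise a null. Matching the parameters forces $\gamma = \epsilon/(2k)$, since $\ell = N = 800k/\epsilon^2 = 400/(\gamma\epsilon)$ then. I would then construct such an experiment by a coupling argument. \cref{cor:atleastpr1} already says a single $\dset$-draw lands in $O'_{m_i}$ with probability at least $\epsilon/(2k)$. The remaining piece is to argue that, conditional on landing in $O'_{m_i}$, the induced distribution on copies is close enough to uniform that one can still extract a $\gamma$-fraction of ``uniform'' draws. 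Here I would lean on \cref{lem:distanceoptimalcenter} together with the invariant that every previously selected $c_j'$ is close to some $c_{m_j}$ with $m_j \ne m_i$: this forces $d(c_{m_i}, \cset^{(i-1)})^2$ to dominate the typical squared deviation $d(p, c_{m_i})^2$ within $O_{m_i}$, so $d(p, \cset^{(i-1)})^2$ is approximately constant across $O_{m_i}$ and the $\dset$-reweighting inside $O'_{m_i}$ is close to uniform.

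With this coupling in place, \cref{lem:prlemma} applied with $P' \leftarrow O'_{m_i}$, $\ell = N$, and $\gamma = \epsilon/(2k)$ gives, with probability at least $3/4$, a uniform sub-multiset $U \subseteq S^{(i)}$ of size $100/\epsilon$ whose ordinary centroid obeys
$$\Delta(O'_{m_i}, G(U)) \le (1+\epsilon/20)\,\Delta(O'_{m_i}, G(O'_{m_i})).$$
Taking $T^{(i)}$ to be the corresponding sub-multiset of $P$ and translating the costs back via the $P \leftrightarrow P'$ correspondence yields the statement of the lemma.

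The step I expect to be the main obstacle is the coupling: rigorously replacing a single weighted $\dset$-draw by a uniform-on-$O'_{m_i}$ draw with probability exactly $\epsilon/(2k)$. Losing a constant factor here would still give a PTAS, but matching the prescribed constants in $N$ and $M$ requires that the ``approximately uniform'' bound inside $O'_{m_i}$ be tight enough to absorb no additional $k$- or $\epsilon$-dependence. Getting the dependence on $d(c_{m_i},\cset^{(i-1)})^2$ versus the $d(p,c_{m_i})^2$ right, using \cref{lem:distanceoptimalcenter} and the inductive invariant, is the delicate calculation that must be carried out carefully.
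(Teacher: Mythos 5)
Your overall scaffolding matches the paper's proof: the reduction to unweighted copies, the identification $\gamma = \epsilon/(2k)$ and $\ell = N = 400/(\gamma\epsilon)$, and the plan to feed a ``uniform with probability $\gamma$, null otherwise'' process into \cref{lem:prlemma} are all exactly what the paper does. The gap is in the one step you yourself flag as the obstacle, and the mechanism you propose for it is not the right one. You plan to argue that, conditioned on landing in $O_{m_i}$, the $\dset$-induced distribution is \emph{approximately uniform} because $d(p,\cset^{(i-1)})^2$ is roughly constant over $O_{m_i}$, which you would derive from \cref{lem:distanceoptimalcenter} and the invariant. This does not hold: \cref{lem:distanceoptimalcenter} only guarantees $d(c_i,c_j)^2 \ge \epsilon(r_i+r_j)$, i.e.\ a center separation of order $\epsilon$ times the \emph{average} cluster radius, so the within-cluster variation of $d(p,c_{m_i})^2$ can completely dominate $d(c_{m_i},\cset^{(i-1)})^2$ rather than be dominated by it. Points of $O_{m_i}$ near a previously covered cluster can have $d(p,\cset^{(i-1)})^2$ arbitrarily small compared to points on the far side, so the conditional distribution can be very far from uniform, and no constant-factor slack rescues this.

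The paper avoids needing approximate uniformity altogether. It writes the exact sampling probability of each element $p \in O_{m_i}$ as $\lambda(p)/W_{m_i}$, invokes \cref{cor:atleastpr1} and \cref{obs:atleastpr2} to assert the pointwise lower bound $\lambda(p) \ge \gamma$, and then performs \emph{rejection sampling}: given that the draw equals $p$, accept it with probability $\gamma/\lambda(p)$ and declare null otherwise. This produces an \emph{exactly} uniform sample of $O_{m_i}$ with probability $\gamma$ regardless of how skewed the conditional distribution is, after which \cref{lem:prlemma} applies verbatim. So the missing idea in your proposal is the accept-with-probability-$\gamma/\lambda(p)$ filter: you need only a pointwise lower bound on sampling probabilities, not near-constancy of $d(p,\cset^{(i-1)})^2$, and the delicate geometric calculation you anticipate is neither needed nor available. (You are in good company in finding this step slippery: the paper's own justification of the pointwise bound, \cref{obs:atleastpr2}, is stated without proof, but that is where the argument must live.)
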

\begin{proof}
    The set $S^{(i)}$ contains $N=\frac{800k}{\epsilon^2}$ samples of $P$ and we are interested in $S^{(i)}\cap O_{m_i}$. Let $Y_1, \dots, Y_N$ be $N$ independent random variables such that for $1\le t \le N$, $Y_t$ is obtained by sampling an element of $P$ using weighted $\dset$-sampling w.r.t $\cset^{(i-1)}$. If the sampled element belongs to $O_{m_i}$, assign it to $Y_t$ or else discard it.

    Let $\gamma=\frac{\epsilon}{2k}$, then \cref{cor:atleastpr1} and \cref{obs:atleastpr2} implies that $Y_t$ is assigned a particular element of $O_{m_i}$ with probability at least $\frac{\gamma}{W_{m_i}}$. We now show how to use \cref{lem:prlemma}.

    For any element $p \in O_{m_i}$, let the probability of $p$ being sampled using weighted $\dset$-sampling be $\frac{\lambda(p)}{W_{m_i}}$. Also note that for all $p \in O_{m_i}$, $\lambda(p) \ge \gamma$. A way to sample a random variable $T_t$ in \cref{lem:prlemma} is as follows: First sample using $Y_t$. If $Y_t$ is not assigned any element, then $T_t$ is null. Otherwise, $Y_t$ is assigned an element $p \in O_{m_i}$, so we define $T_t=p$ with probability $\frac{\gamma}{\lambda(p)}$, null otherwise.

    Observe that with probability $\gamma$, $T_t$ is a uniform sample of $O_{m_i}$ and null with probability $1- \gamma$. Therefore, mimicking the \cref{lem:prlemma}. Also, observe that the sampled elements of $Y_1, \dots, Y_N$ are always a superset of $T_1, \dots, T_N$. We now use  \cref{lem:prlemma} to complete the proof. 
\end{proof}

We repeat \cref{alg:PTASkmeans} for $2^k$ times to get $(1+\epsilon)$-approximation with high probability for weighted $k$-means that runs in time $nd\cdot 2^{\tilde{O}(k/\epsilon)}$. Lastly, we now remove the $(k,\epsilon)$-irreducibility assumption using Theorem 1 and Theorem 2 of \cite{jaiswal2012simple} to get our final running time. We state the theorems here for completeness.

\begin{theorem}(Theorem 1 of \cite{jaiswal2012simple})\label{thm:thm1}
    If a given point set is $\left(k,\frac{\epsilon}{(1+\epsilon/2)\cdot k} \right)$-irreducible, then there is an algorithm that gives a $\left(1+ \frac{\epsilon}{(1+\epsilon/2)\cdot k} \right)$-approximation to the $k$-means objective and that runs in time $O(nd \cdot 2^{\tilde{O}(k^2/\epsilon)})$.
\end{theorem}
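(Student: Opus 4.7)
My plan is to observe that \cref{thm:thm1} follows directly from the analysis already carried out in \cref{sec:analysisalgo1} by an appropriate rescaling of the accuracy parameter. Specifically, I would set $\epsilon' = \epsilon/((1+\epsilon/2)\cdot k)$ and run \cref{alg:PTASkmeans} with $\epsilon'$ in place of $\epsilon$. The sampling size becomes $N = 800k/\epsilon'^{\,2}$, the candidate subset size becomes $M = 100/\epsilon'$, and the number of independent trials remains $2^k$. The hypothesis of the theorem is exactly $(k,\epsilon')$-irreducibility, which is what the earlier analysis requires.

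The main step in the proof is a bookkeeping verification that each lemma in \cref{sec:analysisalgo1} remains valid under the substitution $\epsilon \mapsto \epsilon'$. \cref{lem:distanceoptimalcenter} immediately gives $d(c_i,c_j)^2 \ge \epsilon'(r_i + r_j)$ under $(k,\epsilon')$-irreducibility, with no other change. \cref{lem:cost} and \cref{cor:atleastpr1} depend on the irreducibility parameter only through this separation, so they carry over verbatim with $\epsilon/2$ and $\epsilon/(2k)$ replaced by $\epsilon'/2$ and $\epsilon'/(2k)$. \cref{lem:prlemma} is independent of the irreducibility parameter, and setting $\gamma = \epsilon'/(2k)$ then gives $\ell = 400/(\gamma\epsilon') = O(k/\epsilon'^{\,2}) = N$, so the sample sizes are still correctly matched. \cref{lem:main} therefore yields, in each iteration, a subset $T^{(i)}\subseteq S^{(i)}$ of size $100/\epsilon'$ whose weighted centroid is a $(1+\epsilon'/20)$-approximation to the corresponding optimal center with constant probability. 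Composing these guarantees across $k$ iterations and amplifying by the $2^k$ independent trials gives a $(1+\epsilon')$-approximation with high probability.

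For the running time, plugging $\epsilon'$ into the per-trial enumeration bound yields $\binom{N}{M}^k = 2^{\tilde{O}(k/\epsilon')}$, and multiplying by the $2^k$ trials and the $O(nd)$ centroid computation per tuple gives
$$nd \cdot 2^k \cdot 2^{\tilde{O}(k/\epsilon')} \;=\; nd \cdot 2^{\tilde{O}(k/\epsilon') + k} \;=\; nd \cdot 2^{\tilde{O}(k^2/\epsilon)},$$
since $k/\epsilon' = k^2(1+\epsilon/2)/\epsilon = \tilde{O}(k^2/\epsilon)$ absorbs the additive $k$. The only point requiring care, which I do not expect to be a genuine obstacle, is confirming that the constants in \cref{lem:prlemma} and \cref{lem:main} (notably the $3/4$ success probability from the Chernoff bound) are preserved under the rescaling; this is immediate because they depend only on the product $\gamma\epsilon'$ through the expected sample size $400/\epsilon'$, which behaves as in the original proof. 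The entire argument is therefore a clean parameter substitution into the analysis of \cref{sec:analysisalgo1}.
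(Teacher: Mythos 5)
Your proposal is correct and matches the intended derivation: the paper imports this statement as a black box from \cite{jaiswal2012simple}, and its substance is exactly the parameter substitution $\epsilon \mapsto \epsilon' = \epsilon/((1+\epsilon/2)k)$ into the analysis of \cref{sec:analysisalgo1}, with the running time $nd\cdot 2^k\cdot\binom{N}{M}^k = nd\cdot 2^{\tilde{O}(k/\epsilon')} = nd\cdot 2^{\tilde{O}(k^2/\epsilon)}$ as you compute. Your bookkeeping (including $\ell = 400/(\gamma\epsilon') = 800k/\epsilon'^2 = N$ and the preservation of the constant success probabilities) is accurate, so no further comment is needed.
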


\begin{theorem}(Theorem 2 of \cite{jaiswal2012simple})\label{thm:thm2}
    There is an algorithm that runs in time $O(nd \cdot 2^{\tilde{O}(k^2/\epsilon)})$ and gives a $(1 + \epsilon)$-approximation to the $k$-means objective.
\end{theorem}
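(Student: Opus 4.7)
The plan is to remove the $(k,\epsilon)$-irreducibility hypothesis required by \cref{thm:thm1} via a standard ``guess $k^{\star}$'' reduction. Set $\epsilon' = \epsilon/((1+\epsilon/2)k)$, which matches the parameter in \cref{thm:thm1}, and let $k^{\star} \in \{1,\ldots,k\}$ be the largest integer for which the instance is $(k^{\star},\epsilon')$-irreducible. Such $k^{\star}$ always exists because the $1$-means instance is vacuously irreducible. By maximality of $k^{\star}$, for every $j \in \{k^{\star}+1,\ldots,k\}$ the instance fails $(j,\epsilon')$-irreducibility, so $\Delta_{j-1}^{w}(P) \le (1+\epsilon')\,\Delta_{j}^{w}(P)$, and telescoping these gives
$$\Delta_{k^{\star}}^{w}(P) \le (1+\epsilon')^{k-k^{\star}}\,\Delta_{k}^{w}(P).$$

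Because $k^{\star}$ is not known in advance, the algorithm enumerates every $j \in \{1,\ldots,k\}$, runs the procedure of \cref{thm:thm1} to obtain an approximate $j$-means solution, pads the result up to $k$ centers with arbitrary duplicates (which cannot increase the $k$-means cost), and finally returns the cheapest of the $k$ candidates. For the run $j = k^{\star}$, the instance is $(k^{\star},\epsilon')$-irreducible by construction, so \cref{thm:thm1} produces a set $\mathcal{C}'$ of $k^{\star}$ centers with
$$\Delta^{w}(P,\mathcal{C}') \le (1+\epsilon')\,\Delta_{k^{\star}}^{w}(P) \le (1+\epsilon')^{k-k^{\star}+1}\,\Delta_{k}^{w}(P) \le (1+\epsilon')^{k}\,\Delta_{k}^{w}(P).$$
The enumeration contributes only a factor of $k$ to the running time, which is absorbed into $2^{\tilde{O}(k^{2}/\epsilon)}$.

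What remains is to verify $(1+\epsilon')^{k} \le 1+\epsilon$. Using $(1+\epsilon')^{k} \le e^{k\epsilon'} = e^{\epsilon/(1+\epsilon/2)}$ together with the elementary calculus inequality $x/(1+x/2) \le \ln(1+x)$ for $x \ge 0$ (verified by Taylor expansion: both sides agree through the $x^{2}$ term, while the next coefficients are $1/3$ for $\ln(1+x)$ and $1/4$ for $x/(1+x/2)$), we obtain $e^{\epsilon/(1+\epsilon/2)} \le 1+\epsilon$, hence $\Delta^{w}(P,\mathcal{C}') \le (1+\epsilon)\,\Delta_{k}^{w}(P)$. The main technical subtlety is precisely this parameter calibration: the specific form $\epsilon' = \epsilon/((1+\epsilon/2)k)$ in \cref{thm:thm1} is tailored so that the compounded error $(1+\epsilon')^{k}$ collapses back to $(1+\epsilon)$ without any additional slack; once this calibration is in place, the rest is bookkeeping.
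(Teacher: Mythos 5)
The paper never proves this statement: it is imported verbatim as Theorem~2 of \cite{jaiswal2012simple}, and your reconstruction is essentially the argument from that source --- guess the largest $k^{\star}$ for which the instance is still irreducible, telescope the failed irreducibility inequalities to get $\Delta^{w}_{k^{\star}}(P)\le(1+\epsilon')^{k-k^{\star}}\Delta^{w}_{k}(P)$, and calibrate $\epsilon'$ so the compounded loss stays below $1+\epsilon$ (your inequality $x/(1+x/2)\le\ln(1+x)$ is correct for all $x\ge 0$; the clean justification is that the difference vanishes at $0$ and has derivative $1/(1+x)-1/(1+x/2)^{2}\ge 0$, rather than the Taylor comparison, which by itself only certifies small $x$). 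The one point you should make explicit is the invocation of \cref{thm:thm1} for a $k^{\star}$-means instance: as that theorem is stated, its irreducibility threshold for $j$ clusters is $\epsilon/((1+\epsilon/2)j)$, which for $j=k^{\star}<k$ exceeds your $\epsilon'=\epsilon/((1+\epsilon/2)k)$, so $(k^{\star},\epsilon')$-irreducibility is formally \emph{weaker} than the stated hypothesis. This is repaired by re-instantiating the accuracy parameter as $\epsilon''$ with $\epsilon''/((1+\epsilon''/2)k^{\star})=\epsilon'$, which makes both the hypothesis and the $(1+\epsilon')$ guarantee match and keeps the exponent at $\tilde{O}(k^{\star}/\epsilon')=\tilde{O}(k^{2}/\epsilon)$; routine bookkeeping, not a gap in the idea.
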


Using \cref{alg:PTASkmeans}, \cref{thm:thm1}, and \cref{thm:thm2}, we get the following result.

\begin{theorem}\label{thm:mainptaskmeans}
    Let $P \subset \RR^d$ be a point set such that each point $p \in P$ has positive finite weight $w_p\ge 1$. Then there is a set $\cset=(c_1',c_2',\dots,c_k')$ of $k$ centers computed in time $O(nd \cdot 2^{\tilde{O}(k^2/\epsilon)})$ such that $\Delta^w(P,\cset)\le (1+\epsilon)\Delta_k^w(P)$.
\end{theorem}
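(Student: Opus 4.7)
The plan is to establish the theorem by combining the invariant-based analysis of \cref{alg:PTASkmeans} under the $(k,\epsilon)$-irreducibility assumption with the reduction machinery of \cite{jaiswal2012simple} (\cref{thm:thm1} and \cref{thm:thm2}) that removes this assumption. First, I would assume the input is $(k,\epsilon)$-irreducible and prove by induction on $i$ that at the end of the $i$-th iteration of the for-loop in \cref{alg:PTASkmeans}, there exist distinct indices $m_1,\dots,m_i \in [k]$ such that $\Delta^w(O_{m_l}, c_l') \le (1+\epsilon/20)\,\Delta^w(O_{m_l}, c_{m_l})$ for every $l \le i$, with probability at least $1/2^i$. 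The base case is immediate since the first center is sampled proportionally to weight, so \cref{lem:main} applies with $\cset^{(0)}=\emptyset$. For the inductive step, conditioning on the invariant at step $i-1$, \cref{cor:atleastpr1} together with \cref{obs:atleastpr2} gives the requisite lower bound on the probability of sampling from an as-yet-unapproximated optimal cluster, and \cref{lem:main} produces the desired subset $T^{(i)} \subseteq S^{(i)}$ of size $100/\epsilon$. The candidate tuple $(s_1,\dots,s_k)$ enumerated in Line~1 is precisely the one for which $G(T^{(i)}) = G(s_i)$ at each step, so enumeration finds it.

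After $k$ iterations the joint success probability is at least $1/2^k$, which the $2^k$ outer trials amplify to a constant. Aggregating the per-cluster guarantee and using \cref{lem:differentcenter} to bound the cost of assigning each $O_{m_l}$ to $c_l'$ yields $\Delta^w(P,\cset) \le (1+\epsilon)\Delta_k^w(P)$ under the irreducibility assumption. The runtime bookkeeping is routine: $\binom{N}{M}^k = 2^{\tilde O(k/\epsilon)}$ tuples times $2^k$ trials times $O(nd)$ work per tuple for centroid computation and cost evaluation gives $nd\cdot 2^{\tilde O(k/\epsilon)}$.

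To drop the $(k,\epsilon)$-irreducibility hypothesis, I would invoke \cref{thm:thm1} with accuracy parameter $\epsilon' = \epsilon/((1+\epsilon/2)\cdot k)$, which produces a $(1+\epsilon')$-approximation for $(k,\epsilon')$-irreducible inputs in time $nd \cdot 2^{\tilde O(k^2/\epsilon)}$; the extra factor of $k$ in the exponent is exactly the shrinkage from $\epsilon$ to $\epsilon/k$. \cref{thm:thm2} then shows how to handle a reducible instance by recursively solving a $j$-means problem for some $j<k$, so that a single pass of the above algorithm across all relevant values of $k$ suffices without further asymptotic overhead. Combining these gives the claimed $(1+\epsilon)$-approximation in time $O(nd \cdot 2^{\tilde O(k^2/\epsilon)})$.

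The hard part will be verifying that the weighted-to-unweighted simulation inside \cref{lem:main} is watertight under the assumption $w_p \ge 1$: the sampling probability of any $p \in O_{m_i}$ must be written as $\lambda(p)/W_{m_i}$ with $\lambda(p) \ge \gamma = \epsilon/(2k)$, so that thinning by the factor $\gamma/\lambda(p)$ produces genuine uniform samples from $O_{m_i}$ and \cref{lem:prlemma} (and hence \cref{lem:indepSampling}) can be applied verbatim. The remaining concern is that \cref{thm:thm1} and \cref{thm:thm2} were originally stated for the unweighted case; I would argue that their proofs rely only on the additive decomposition in \cref{lem:differentcenter} and on a $(1+\epsilon)$-approximation oracle for $j$-means for $j \le k$, both of which we have in the weighted setting, so the reduction transfers without modification.
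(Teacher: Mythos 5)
Your proposal follows essentially the same route as the paper: the invariant-based induction via \cref{cor:atleastpr1}, \cref{obs:atleastpr2}, and \cref{lem:main} under $(k,\epsilon)$-irreducibility, amplification over $2^k$ trials, and then invoking \cref{thm:thm1} and \cref{thm:thm2} of \cite{jaiswal2012simple} to remove the irreducibility assumption at the cost of the extra factor $k$ in the exponent. Your closing remarks are if anything more careful than the paper, which cites \cref{thm:thm1} and \cref{thm:thm2} for the weighted setting without explicitly justifying the transfer from the unweighted case.
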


As an application of  \cref{thm:mainptaskmeans}, we now show an improved approximation for \emph{sensor coverage problem} from $O(\log k)$ \cite{deshpande2014guaranteed} to $(1+\epsilon)$ (with running time $O(nd \cdot 2^{\tilde{O}(k^2/\epsilon)})$).

\section{A Simple PTAS for Sensor Coverage Problem}\label{sec:sensor}
In this section, we formally define the sensor coverage problem and establish its connection to the weighted $k$-means problem \cite{deshpande2014guaranteed}. Leveraging  \cref{thm:mainptaskmeans}, we then derive a PTAS for the sensor coverage problem.

In the sensor coverage problem, we are given a convex region $Z \subseteq \mathbb{R}^2$. Each point $z \in Z$ has an associated nonnegative value $\phi(z)$, which represents the importance of covering that location. The function $\phi$ is normalized so that the total importance in the region is equal to one, that is, $\int_Z \phi(z)\, dz = 1$.

We aim to place $k$ sensors in the region, with their positions denoted by $\mathcal{C} = (c_1, \dots, c_k) \subset Z$. The effectiveness of a sensor at covering a location $z$ decreases as the distance between $z$ and the sensor increases. Specifically, the sensing performance at point $z$ with respect to a sensor located at $c_i$ degrades proportionally to the square of the Euclidean distance $\|z - c_i\|^2$. For a fixed sensor locations $(c_1,c_2,\dots, c_k)$, the sensor performance induces a \emph{Voronoi partition} $V(\cset)=\{V_1,\dots,V_k\}$ of $Z$, where,

$$V_i=\{z \hspace{.1cm} | \hspace{.1cm} ||z-c_i||^2 \le ||z-c_j||^2 \hspace{.1cm} \forall j \neq i\}.$$
This implies that the sensor located at $c_i$ covers all the points in $V_i$. Therefore, for the sensor locations $\cset$, the coverage cost, denoted as $H(\cset)$, is defined as:

$$H(\cset)=\sum_{i=1}^k \int_{V_i} ||z-c_i||^2 \phi(z) dz.$$
The optimal sensor coverage problem is to find a sensor location $\cset^o=(c_1^o,\dots,c_k^o)$ for which the coverage cost is minimized, i.e.,

$$\cset^o= \arg \min_{\cset} H(\cset).$$

Deshpande \cite{deshpande2014guaranteed} showed that the sensor covering problem can be posed as a weighted $k$-means problem. To this end, we show how to reduce the sensor covering problem to the weighted $k$-means problem, following \cite{deshpande2014guaranteed}.

Given a convex region $Z$, we discretize $Z$ by overlaying a uniform square grid of size $\epsilon \times \epsilon$. The center of each resulting \emph{cell} in this grid is treated as a potential sensor location. Let the cells intersecting $Z$ be denoted by $G_1, G_2, \dots, G_n$. Among these, some are full square cells lying entirely within $Z$, while others, lying along the boundary, may form convex polygonal shapes (see \cref{fig:sensor}). For each cell $G_i$, we define its \emph{weight} as:

$$w_i = \int_{G_i} \phi(z)\, dz,$$

and its \emph{center of mass}, denoted by $x_i$, as:

$$x_i = \frac{1}{w_i} \int_{G_i} z\, \phi(z)\, dz.$$

\begin{figure}[ht]
  \centering
  \includegraphics[scale=0.2]{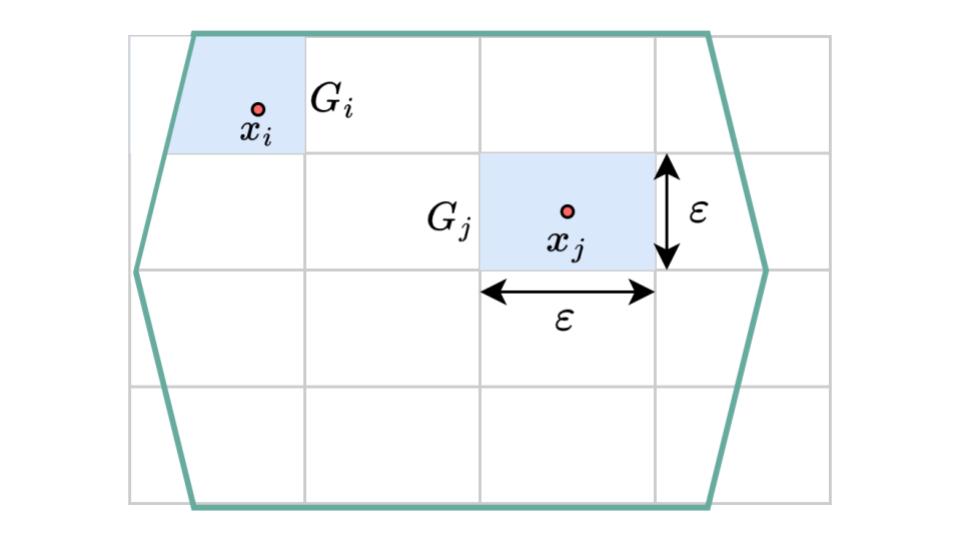}
  \captionsetup{justification=centering}
  \caption{This figure depicts a convex region $Z$ which is superimposed on a grid of size $\epsilon \times\epsilon$.}
  \label{fig:sensor}
\end{figure}

Let $X = (x_1, x_2, \dots, x_n)$ be the set of candidate locations for placing the $k$ sensors. Each $x_i$ is a point in $\mathbb{R}^2$ and is associated with a nonnegative weight $w_i$. The set $X$ can therefore be seen as a weighted point set in the plane. We use $\Delta^w(X, \mathcal{C})$ to denote the weighted $k$-means cost of assigning the points in $X$ to center $\mathcal{C} = (c_1, \dots, c_k)$. Deshpande \cite{deshpande2014guaranteed} established the following connection between weighted $k$-means and the sensor coverage problem.

\begin{theorem}(Theorem 2 of \cite{deshpande2014guaranteed})\label{thm:deshpande}
    $H(\cset)=\Delta^w(X,\cset) + \sum_{i} J_{G_i|x_i}$, where $J_{G_i|x_i}$ is the moment of inertia of cell $G_i$ about its center of mass $x_i$. 
\end{theorem}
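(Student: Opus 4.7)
The plan is to derive the identity by applying the parallel-axis (Steiner) theorem cell-by-cell, exploiting the fact that $x_j$ is by construction the $\phi$-weighted centroid of $G_j$, so that the cross term in the expansion of $\|z-c_i\|^2$ vanishes upon integration over $G_j$.

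First I would rewrite the Voronoi integral as a sum over grid cells. Assuming each $G_j$ lies inside a single Voronoi region $V_{\sigma(j)}$, where $\sigma(j)=\arg\min_i\|x_j-c_i\|^2$, we have
\begin{equation*}
H(\cset)=\sum_{j=1}^{n}\int_{G_j}\|z-c_{\sigma(j)}\|^2\phi(z)\,dz.
\end{equation*}
Next, for each cell $G_j$ I would expand
\begin{equation*}
\|z-c_{\sigma(j)}\|^2=\|z-x_j\|^2+\|x_j-c_{\sigma(j)}\|^2+2\langle z-x_j,\,x_j-c_{\sigma(j)}\rangle,
\end{equation*}
and integrate against $\phi$ over $G_j$. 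The cross term drops out exactly because $\int_{G_j}(z-x_j)\phi(z)\,dz=0$ by the definition of $x_j$; this is the continuous analogue of the cancellation already used in the proof of \cref{lem:differentcenter}. The two surviving terms are precisely $J_{G_j\mid x_j}$ and $w_j\,\|x_j-c_{\sigma(j)}\|^2$.

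Summing the per-cell identity over $j$, the first piece yields $\sum_j J_{G_j\mid x_j}$, and since $c_{\sigma(j)}$ is the sensor nearest to $x_j$, the second piece equals $\sum_j w_j\,d(x_j,\cset)^2=\Delta^w(X,\cset)$, producing the claimed equation.

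The main obstacle I foresee is making the first step rigorous when some $G_j$ straddles a Voronoi boundary, so $\sigma(j)$ is not a valid nearest-sensor assignment for every $z\in G_j$. The cleanest way to handle this is to invoke Deshpande's discretization convention, under which an entire cell $G_j$ is treated as a single weighted atom located at $x_j$: the partition of $Z$ used to compute $H(\cset)$ is then the coarsening of the Voronoi partition that keeps each $G_j$ intact and assigns it to $c_{\sigma(j)}$, making the decomposition above an exact equality. Once this convention is fixed, the rest of the argument is the Steiner calculation sketched above and no further estimates are needed.
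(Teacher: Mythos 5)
Your derivation is correct and is essentially the argument behind the cited result: the paper itself imports \cref{thm:deshpande} from Deshpande without proof, and the per-cell parallel-axis expansion you use (cross term vanishing because $x_j$ is the $\phi$-weighted centroid of $G_j$) is exactly the continuous analogue of \cref{lem:differentcenter} and the computation underlying Deshpande's Theorem~2. You also correctly isolate the only genuine subtlety --- that the stated identity is an exact equality only under the convention that each cell $G_j$ is assigned wholesale to the sensor nearest its centroid $x_j$ (otherwise one only gets $H(\cset)\le \Delta^w(X,\cset)+\sum_i J_{G_i|x_i}$ for boundary-straddling cells) --- and resolve it the same way the source does.
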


Deshpande \cite{deshpande2014guaranteed} showed that when the grid is sufficiently fine (i.e., for small enough $\epsilon$), the term $\sum_i J_{G_i|x_i}$ can be bounded by a small constant and thus can be safely ignored in the analysis. As a result, ~\cref{thm:deshpande} implies that solving the sensor coverage problem reduces to solving the weighted $k$-means problem. Using this connection, Deshpande obtained an $O(\log k)$-approximation for the sensor coverage problem by providing an $O(\log k)$-approximation algorithm for the weighted $k$-means problem. To achieve this, Deshpande extended the analysis of Arthur and Vassilvitskii \cite{arthur2006k} 
to the weighted setting using weighted $\dset$-sampling.

Due to the simple connection between the weighted $k$-means problem and the sensor coverage problem, we can now obtain a $(1+\epsilon)$-approximation for sensor coverage by applying weighted $\dset$-sampling. This leads to the following result:

\begin{theorem}\label{thm:mainptassensor}
Let $\mathcal{C} = (c_1', c_2', \dots, c_k')$ be the set of $k$ centers returned by ~\cref{alg:PTASkmeans} to minimize $\Delta^w(X, \mathcal{C})$. Then, $H(\mathcal{C}) \leq (1 + \epsilon) H(\mathcal{C}^o).$

\end{theorem}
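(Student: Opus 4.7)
The plan is a short algebraic reduction that combines \cref{thm:deshpande} with the $k$-means approximation guarantee of \cref{thm:mainptaskmeans}. The pivotal observation is that the additive term $J := \sum_i J_{G_i \mid x_i}$ appearing in \cref{thm:deshpande} depends only on the discretization (on the cells $G_i$ and their centers of mass $x_i$) and is independent of the sensor placement; consequently $H(\cdot)$ and $\Delta^w(X,\cdot)$ differ by the same constant offset at every placement, so any $(1+\epsilon)$-approximate minimizer of one is automatically an approximate minimizer of the other.

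Concretely, I would first apply \cref{thm:deshpande} to both the algorithm's output $\cset$ and to the sensor optimum $\cset^o$ to obtain
$$H(\cset) = \Delta^w(X,\cset) + J, \qquad H(\cset^o) = \Delta^w(X,\cset^o) + J.$$
Letting $\cset^*$ denote an optimal weighted $k$-means solution on the weighted point set $X$, the optimality of $\cset^*$ gives $\Delta^w(X,\cset^*) \le \Delta^w(X,\cset^o)$, and \cref{thm:mainptaskmeans} applied to $X$ yields
$$\Delta^w(X,\cset) \;\le\; (1+\epsilon)\,\Delta^w(X,\cset^*) \;\le\; (1+\epsilon)\,\Delta^w(X,\cset^o).$$
Combining these observations,
$$H(\cset) \;=\; \Delta^w(X,\cset) + J \;\le\; (1+\epsilon)\,\Delta^w(X,\cset^o) + J \;=\; (1+\epsilon)\,H(\cset^o) - \epsilon J,$$
and since each $J_{G_i \mid x_i}$ is a nonnegative moment of inertia, $J \ge 0$, so the term $-\epsilon J$ can be dropped to conclude $H(\cset) \le (1+\epsilon)\,H(\cset^o)$.

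There is essentially no hard step once \cref{thm:deshpande} is in place; all the heavy lifting has been done in \cref{thm:mainptaskmeans}. The only minor technical checks I anticipate are (i) the hypothesis $w_p \ge 1$ of \cref{thm:mainptaskmeans}, which is harmless because uniformly rescaling all weights $w_i$ by a positive constant multiplies $\Delta^w(X,\cdot)$ by the same constant and hence leaves both the optimizer and the approximation ratio unchanged, so we may assume $\min_i w_i \ge 1$; and (ii) the sign condition $J \ge 0$, which is precisely what allows us to absorb the correction $-\epsilon J$ in the final inequality.
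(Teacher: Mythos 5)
Your proposal is correct and follows the same route as the paper: invoke \cref{thm:deshpande} to identify $H(\cdot)$ with $\Delta^w(X,\cdot)$ up to the additive term $J=\sum_i J_{G_i\mid x_i}$, then apply the guarantee of \cref{thm:mainptaskmeans}. The one substantive difference is in how the offset $J$ is handled, and here your version is actually tighter than the paper's. The paper's proof never mentions $J$; it leans on the earlier remark that for a sufficiently fine grid the term ``can be bounded by a small constant and thus can be safely ignored,'' which as stated would only give $H(\cset)\le(1+\epsilon)H(\cset^o)+O(J)$ and so requires an extra (unquantified) smallness argument to recover a clean multiplicative guarantee. Your observation that $J$ is placement-independent and nonnegative lets you write $H(\cset)\le(1+\epsilon)H(\cset^o)-\epsilon J\le(1+\epsilon)H(\cset^o)$ exactly, with no assumption on the grid resolution; the approximation on the sensor objective is in fact no worse than the one on the $k$-means objective because the favorable sign of $-\epsilon J$ only helps. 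Your technical check (i) on rescaling the weights to satisfy $w_p\ge 1$ is also a legitimate gap in the paper's presentation (the cell weights $w_i=\int_{G_i}\phi$ sum to $1$ and are therefore individually less than $1$), and your fix --- uniform rescaling, which multiplies $\Delta^w(X,\cdot)$ by a constant and preserves both the minimizer and the ratio --- is sound, modulo discarding cells of zero mass. In short, nothing is missing; your write-up is the more complete of the two.
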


\begin{proof}
We begin by invoking ~\cref{thm:deshpande}, which establishes a reduction from the sensor coverage problem to the weighted $k$-means problem. We then apply ~\cref{alg:PTASkmeans} to solve the weighted $k$-means instance and obtain the set of centers $\mathcal{C}$.

According to ~\cref{thm:mainptaskmeans}, ~\cref{alg:PTASkmeans} computes a $(1+\epsilon)$-approximation for the weighted $k$-means problem in time $O(nd \cdot 2^{\tilde{O}(k^2/\epsilon)})$. Therefore, by connection to sensor coverage, this directly yields a $(1+\epsilon)$-approximation for the sensor coverage problem.
\end{proof}

\section{Conclusion}\label{sec:conclusion}

In this work, we presented a simple PTAS for the weighted $k$-means problem that avoids the complexity of coreset constructions. Our algorithm runs in time $nd \cdot 2^{\tilde{O}(k^2 / \epsilon)}$ and is both easy to analyze and straightforward to implement, making it a practical alternative to more involved coreset-based approaches.

Using our weighted $k$-means result, we obtained a PTAS for the sensor coverage problem, improving upon the previous $O(\log k)$-approximation of Deshpande ~\cite{deshpande2014guaranteed}.

\bibliographystyle{alpha}
\newcommand{\etalchar}[1]{$^{#1}$}

\appendix

\section{Proof of \cref{lem:cost}}\label{sec:appendix}

For contradiction, assume the statement is true. Then,

\begin{align*}
    \Delta^w(P,\cset^{(i-1)}) &= \sum_{m \in M' } \Delta^w(O_m,\cset^{(i-1)}) +  \sum_{m \neq M' } \Delta^w(O_m,\cset^{(i-1)})\\
    &< \sum_{m \in M' } \Delta^w(O_m,\cset^{(i-1)}) + \frac{\epsilon/2}{1-\epsilon/2}.\sum_{m \in M' } \Delta^w(O_m,\cset^{(i-1)})\\
    &= \frac{1}{1-\epsilon/2}.\sum_{m \in M' } \Delta^w(O_m,\cset^{(i-1)})\\
    & \le \frac{1+ \epsilon/20}{1-\epsilon/2}. \sum_{m \in M'} \Delta^w_1(O_m)  \hspace{1cm}\text{Using invariant for $\cset^{(i-1))}$}\\
    & \le (1+\epsilon).\sum_{m \in M'} \Delta^w_1(O_m) \le (1 + \epsilon) \sum_{m \in [k]} \Delta^w_1(O_m)
\end{align*}

But this contradicts that $P$ is $(k,\epsilon)$-irreducible.

\end{document}